\newtheorem{theorem}{Theorem}[section]
\newtheorem{lemma}[theorem]{Lemma}
\theoremstyle{definition}
\newtheorem{definition}[theorem]{Definition}
\renewcommand{\paragraph}[1]{\noindent {\bf #1}}
\newcommand{\Exp}{\operatornamewithlimits{\mathbb{E}}}
\newcommand{\eps}{\epsilon}
\newcommand{\R}{\mathbb{R}}
\newcommand{\eqdef}{{\stackrel{\rm def}{=}}}
\newcommand{\polylog}[1]{\mathrm{polylog}{#1}}
\newcommand{\poly}[1]{\mathrm{poly}(#1)}
\newcommand{\mcA}{{\mathcal A}}
\newcommand{\ith}[1]{{#1}^{\text{th}}}
\newcommand{\supp}[1]{\mathrm{supp}(#1)}
\newcommand{\parent}[1]{\mathrm{parent}(#1)}
\newcommand{\ignore}[1]{}
\begin{document}

\title{Space-efficient Local Computation Algorithms}
\author{
Noga Alon\thanks{Sackler School of Mathematics and Blavatnik School of Computer Science,
Tel Aviv University, Tel Aviv 69978, Israel and Institute for Advanced Study,
Princeton, New Jersey 08540, USA.  E-mail: {\tt nogaa@tau.ac.il}.
Research supported in part by an ERC Advanced grant, by a USA-Israeli BSF
grant and by NSF grant No.  DMS-0835373.}
\and
Ronitt Rubinfeld\thanks{CSAIL, MIT, Cambridge, MA 02139, USA and
School of Computer Science, Tel Aviv University, Tel Aviv 69978, Israel.
E-mail: {\tt  ronitt@csail.mit.edu}. 
Supported by NSF grants CCF-0728645 and CCF-1065125,
Marie Curie Reintegration grant PIRG03-GA-2008-231077
 and the Israel Science Foundation grant nos. 1147/09 and  1675/09.}
\and
Shai Vardi
\thanks{School of Computer Science, Tel Aviv University, Tel Aviv 69978, Israel.
 E-mail: {\tt  shaivardi@gmail.com}. Supported by
 Israel Science Foundation grant no. 1147/09. }
\and 
Ning Xie
\thanks{CSAIL, MIT, Cambridge MA 02139, USA.
 E-mail: {\tt  ningxie@csail.mit.edu}. 
 Research supported by NSF grants CCF-0728645, CCF-0729011 and CCF-1065125.}
}
\date{}

\setcounter{page}{0}
\maketitle

\begin{abstract}
Recently Rubinfeld et al. (ICS 2011, pp. 223--238) proposed a new model of sublinear algorithms 
called \emph{local computation algorithms}. 
In this model, a computation problem $F$ may have more than one legal solution and 
each of them consists of many bits.
The local computation algorithm for $F$ should answer in an online fashion, for any index $i$, 
the $i^{\mathrm{th}}$ bit of some legal solution of $F$. 
Further, all the answers given by the algorithm should be consistent with at least
one solution of $F$.  

In this work, we continue the study of local computation algorithms.
In particular, we develop a technique which under certain conditions can be applied 
to construct local computation algorithms that run not only in polylogarithmic time but also 
in polylogarithmic \emph{space}.
Moreover, these local computation algorithms are easily parallelizable and 
can answer all parallel queries consistently.
Our main technical tools are pseudorandom numbers with bounded independence
and the theory of branching processes.
\end{abstract}

\newpage

\section{Introduction}\label{Sec:intro}

The classical view of algorithmic analysis, in which the algorithm
reads the entire input, performs a computation and then writes out the entire output, 
is less applicable in the context of computations on massive data sets.    
To address this difficulty, several alternative models of computation have been adopted, 
including distributed
computation as well as  various sub-linear time and space  models.

Local computation algorithms (LCAs) were proposed in~\cite{RTVX11}
to model the scenario in which inputs to and outputs
from the algorithms are large,
such that writing out the entire output
requires an amount of time that is unacceptable.   On the other
hand, only
small portions of the  output are required at any point in time
by any specific user.
LCAs support queries to the output by the user, such
that after each query to a specified location $i$, the
LCA outputs the value of the output at location $i$.    
LCAs were inspired by and intended as a generalization of several
models that appear in the literature, including local algorithms, locally decodable
codes and local reconstruction algorithms.
LCAs whose time complexity is efficient in terms of the amount
of solution requested by the user
have been given for various combinatorial and
coding theoretic problems.

One difficulty
is that for many computations, more than one output is considered to
be valid, yet the values returned by the LCA over time must be consistent.  
Often, the straightforward solutions ask that the LCA store 
intermediate values of the computations in order to
maintain consistency for later computations.
Though standard techniques can be useful for recomputing
the values of random coin tosses in a straightforward manner,
some algorithms (e.g., many greedy algorithms) choose very different solutions 
based on the order of input queries.    
Thus, though the time requirements of the LCA may be efficient
for each query, it is not always clear how to {\em bound the storage requirements}
of the LCA by a function that is sublinear in the size of the query history.
It is this issue that we focus on in this paper.


\subsection{Our main results}
Before stating our main results, we mention two additional
desirable properties of LCAs.     Both of these properties are
achieved in our constructions of LCAs with small storage requirements.
The first is that an LCA should be {\em query oblivious},
that is the outputs of $\mcA$ should not depend on the order of the queries but
only on the input and the random bits generated on the random tape of $\mcA$.
The second is that the LCA should be {\em parallelizable}, i.e., that it
is able to answer multiple queries simultaneously in a consistent manner.

All the LCAs given in~\cite{RTVX11b} suffer from one or more of the following drawbacks: 
the worst case space complexity is linear, 
the LCA is not query oblivious, 
and the LCA is not parallelizable.
We give new techniques to construct LCAs for the problems studied in~\cite{RTVX11b} which
run in polylogarithmic time as well as polylogarithmic space. 
Moreover, all of the LCAs are query oblivious and easily parallelizable.

\begin{theorem}[Main Theorem $1$ (informal)]
There is an LCA for \emph{Hypergraph Coloring} that runs in polylogarithmic time and space.
Moreover, the LCA is query oblivious and parallelizable.
\end{theorem}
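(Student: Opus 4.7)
The plan is to take the polylogarithmic-time LCA for hypergraph 2-coloring that is (essentially) implicit in the work of Rubinfeld et al.\ and to compress its randomness and its query-history bookkeeping into polylogarithmic space, using $k$-wise independent pseudorandom colorings together with a branching-process analysis of the ``influence tree'' of a query. First I would fix the algorithmic template: to each vertex $v$ assign an initial color $\chi_0(v) \in \{0,1\}$, locate the hyperedges that are monochromatic under $\chi_0$ (the \emph{bad} edges), and then run a local, deterministic fix-up procedure (in the spirit of Beck's or Moser--Tardos's analysis of the Lov\'asz Local Lemma) on the connected components of the bad edges in the dependency hypergraph. The crucial observation driving the space bound is that the final color of a queried vertex $v$ is completely determined by the initial colors of the vertices that lie in the connected component of bad edges reachable from $v$; if this component has size $s$, then the query can be answered in time and space $\mathrm{poly}(s,k,\log n)$ provided the initial colors can be evaluated from a small seed.

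Next I would replace truly random initial colors by a $t$-wise independent generator with seed length $O(t \log n)$, choosing $t = \mathrm{polylog}(n)$. Under the usual LLL-style parameter regime (each hyperedge has size $k$ and each vertex appears in at most $d$ hyperedges with $d \cdot 2^{-k}$ sufficiently small), I would use a Galton--Watson branching process argument to bound the probability that the bad-edge component containing a fixed vertex $v$ has size exceeding $c \log n$. The BFS from $v$ through bad edges is dominated by a branching process whose offspring distribution has mean strictly less than one, so a standard subcritical tail bound yields an exponentially decaying probability in the component size; union-bounding over the $n$ possible queries shows that with high probability every component is of size $O(\log n)$. I would then check that this analysis only inspects events depending on at most $\mathrm{poly}(k,\log n)$ initial colors at a time, so replacing full independence by $t$-wise independence for $t=\mathrm{polylog}(n)$ changes these probabilities by a negligible amount.

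With this in hand, the LCA on query $v$ simply performs a local search: starting from $v$, it repeatedly computes (from the seed) the colors of neighbors, identifies any bad hyperedges touching the current frontier, and extends the component accordingly; it then runs the deterministic fix-up on the (polylog-sized) component and outputs $v$'s resulting color. The space used is the seed ($\mathrm{polylog}$ bits) plus a stack of size proportional to the component ($\mathrm{polylog}$ bits), and the algorithm is a pure function of the input and the random seed, which immediately yields query obliviousness and trivial parallelizability: distinct queries share only the read-only seed, and on the low-probability event that two queries fall in the same component they produce the same answers because the fix-up is deterministic on that component.

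The main obstacle I expect is the interaction between derandomization and the branching process analysis. A clean LLL-style argument assumes full independence of $\chi_0$, whereas $t$-wise independence only controls events depending on $\le t$ coordinates. I would address this in two stages: first, observe that the event ``the component of $v$ reaches size $s$'' is a monotone function of a witness tree of size $\le s$, so its probability is controlled by $s$-wise marginals; second, choose $t$ a little larger than the high-probability cap $c \log n$ on component size so that the branching bounds transfer verbatim. A secondary concern is ensuring the fix-up procedure itself is canonical (independent of query order) on each component; this can be enforced by ordering vertices and edges lexicographically and always running the fix-up in this canonical order, which preserves both obliviousness and parallel consistency.
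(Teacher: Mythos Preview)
Your plan diverges from the paper's in a fundamental way, and it also has a real gap.

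\textbf{Different approach.} The paper's central difficulty is that Alon's Phase~1 is \emph{sequential}: vertices are colored one by one, and once a hyperedge becomes ``dangerous'' its remaining vertices are frozen. Thus the color of a vertex depends on which vertices were colored \emph{before} it, not merely on a one-shot random assignment. The paper's main technical contribution is to localize this sequential process by imposing a \emph{random ordering} on the vertices and then bounding the size of the resulting random \emph{query tree} (the tree of earlier-ranked vertices one must examine) via a level-by-level Galton--Watson analysis; the almost $k$-wise independent ordering of Section~4 then stores this ordering in polylog space. Your proposal skips this entirely: you color all vertices simultaneously from a $t$-wise independent source and then look at connected components of monochromatic edges. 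Your branching process is about bad-edge components in the dependency graph, not about query trees induced by a random order. These are different objects and different arguments.

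\textbf{The gap.} Even granting that bad-edge components have size $O(\log n)$ with high probability, your ``local, deterministic fix-up procedure'' on such a component is not specified, and this is exactly where the running time blows up. Brute-force search over a component with $\Theta(\log n)$ hyperedges touches $\Theta(\log n)$ vertices and costs $2^{\Theta(\log n)} = \mathrm{poly}(n)$ time, not polylog. This is precisely why Alon's (and the paper's) algorithm has \emph{three} phases: Phase~2 is another randomized coloring round that shrinks surviving components to $O(\log\log n)$ before Phase~3 brute force becomes polylog. A single deterministic fix-up cannot replace this. Moreover, Alon's freezing mechanism in Phase~1 is what guarantees that surviving hyperedges retain enough uncolored vertices for Phase~2 to succeed; your ``color everything at once'' variant discards this guarantee, so you would also need to argue separately that recoloring the whole component in Phase~2 still works under your parameter regime. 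Finally, your subcritical-branching claim for bad-edge components glosses over the dependence between overlapping hyperedges; the standard route (Beck's $3$-tree argument) handles this, but it is not the simple domination you describe, and it interacts with the $t$-wise independence replacement in a way you have not checked.
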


\begin{theorem}[Main Theorem $2$ (informal)]
There is an LCA for \emph{Maximal Independent Set} that runs in polylogarithmic time and space.
Moreover, the LCA is query oblivious and parallelizable.
\end{theorem}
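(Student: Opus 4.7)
The plan is to build a query-oblivious, polylogarithmic time and space LCA for MIS following the random-rank paradigm of the earlier LCA, and then to derandomize the ranks using a $k$-wise independent hash function so that the entire algorithm's persistent state fits in $\mathrm{polylog}(n)$ bits. First I would describe the underlying algorithm: for a bounded-degree graph $G=(V,E)$ with $|V|=n$, use a family of $k$-wise independent hash functions $h:V\to\{0,1\}^{O(\log n)}$ to assign each vertex a \emph{rank} $r(v)=h(v)$, and declare $v\in\mathrm{MIS}$ iff no neighbor $u$ with $r(u)<r(v)$ lies in the MIS. To answer a query on $v$, one recursively explores the subgraph $T(v)$ of vertices reachable from $v$ along strictly decreasing rank paths; because we only recurse on strictly lower-ranked neighbors, the exploration always terminates, and the answer for $v$ is determined by the ranks in $T(v)$.

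The next step is to bound the expected (and with high probability, worst-case over the random bits) size of $T(v)$. This is where the branching-process toolkit enters: I would couple the rank-decreasing exploration rooted at $v$ to a Galton--Watson process in which, at each level, each current vertex has at most $d$ children (one per neighbor) and each neighbor $u$ survives to the next generation with probability approximately $r(\text{parent})$, since $u$ is further explored only if $r(u)<r(\text{parent})$. A short calculation shows that the expected number of vertices at depth $t$ decays like $d^t/t!$, so the total expected size of $T(v)$ is $O(1)$ for $d=O(1)$, and a tail bound gives that $|T(v)|=O(\log n)$ with probability $1-1/\mathrm{poly}(n)$. Taking a union bound over vertices explored within a single query then yields a $\mathrm{polylog}(n)$ time (and query) bound per LCA invocation.

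The central obstacle is showing that this branching-process analysis still holds when the ranks are only $k$-wise independent rather than fully independent, with $k$ itself polylogarithmic. The key observation is that the survival of a node to depth $t$ depends on only $t$ rank comparisons along its ancestor path, so any bad event involving vertices at depth $\leq t$ depends on only $O(t \cdot d^t)$ ranks. Choosing $k = \Theta(\log n)$ and truncating the exploration at depth $O(\log n / \log\log n)$, all relevant moment computations used in the tail bound match those under full independence; a $k$-th moment Markov inequality together with standard concentration for $k$-wise independent sums (in the spirit of Schmidt--Siegel--Srinivasan) replaces the Chernoff/Azuma bounds that would be used in the fully independent setting. The hash function itself can be stored in $O(k\log n)=\mathrm{polylog}(n)$ bits, which together with the $\mathrm{polylog}(n)$ working storage for one DFS of $T(v)$ gives the claimed polylogarithmic space bound.

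Finally, query obliviousness and parallelizability are essentially free from this construction. Since the output bit for vertex $v$ is a deterministic function of the input and of the fixed hash seed stored on the random tape, no cross-query state is needed and the order of queries is irrelevant; in particular, multiple queries may be served in parallel by independent copies of the recursive explorer that all consult the same small seed. To convert the informal statement into the formal theorem I would then instantiate the parameters explicitly (degree bound $d$, independence $k$, rank range), verify that the failure probability can be absorbed into a standard repetition-on-failure scheme without blowing up the space bound, and conclude the asserted $\mathrm{polylog}(n)$ time and space, query-oblivious, parallelizable LCA for MIS.
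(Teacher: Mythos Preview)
Your approach differs substantially from the paper's. For \emph{Maximal Independent Set} the paper does \emph{not} use the random-rank/query-tree paradigm at all; instead it takes the existing two-phase LCA of~\cite{RTVX11b}, which simulates $O(d\log d)$ rounds of Luby's parallel MIS algorithm and then greedily finishes each surviving component, and observes that the correctness analysis (specifically, that surviving components have size $O(\poly{d}\log n)$ with high probability) only requires the underlying coin flips to be $k$-wise independent for $k=d^{O(\log d)}\log n$. Replacing the true random bits by a $k$-wise independent string of seed length $O(k\log n)$ then gives polylogarithmic space directly, with no query tree, no random ordering, and no branching-process argument. The random-rank and branching-process machinery is deployed in the paper only for \emph{Hypergraph Coloring}. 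What your route would buy, if carried out, is a dependence on $d$ that is polynomial rather than the paper's $d^{O(d\log d)}$; what the paper's route buys is that the derandomization is a one-line observation on top of an already-analyzed algorithm.

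Your route---greedy MIS under a random ordering, with a query-tree bound and $k$-wise independent ranks---is essentially the paper's Hypergraph Coloring technique transplanted to MIS, and it can be made to work, but your sketch of the tail bound has a genuine gap. The query tree is \emph{not} a Galton--Watson process: a child $u$ of a node with rank $r$ survives with probability $r$, so the offspring distribution is not identically distributed across generations, and the naive coupling you describe does not yield a tail bound. Knowing that the expected number of depth-$t$ vertices is $d^t/t!$ and hence $\Exp[|T|]=O(1)$ does not by itself give a $1-1/\poly{n}$ tail; the paper's argument (Theorem~\ref{thm:query_tree}) partitions $[0,1]$ into $D{+}1$ rank buckets so that within each bucket the growth \emph{is} dominated by a subcritical Galton--Watson process, and then applies Otter's theorem level by level. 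This yields only $|T|\le C(D)\log^{D+1} n$ with high probability, not the $O(\log n)$ you assert (the paper explicitly states $O(\log n)$ as a conjecture). As a consequence your choice $k=\Theta(\log n)$ is too small; you would need $k=\polylog{n}$ matching the larger query-tree bound, and the appropriate primitive is the almost $k$-wise independent random \emph{ordering} of Section~\ref{Sec:orderings} rather than a generic $k$-wise independent hash.
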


We remark that following \cite{RTVX11b}, 
analogous techniques can be applied to construct LCAs with
all of the desirable properties for the
radio network problem and $k$-CNF problems. 

%
%

\subsection{Techniques}
There are two main technical obstacle in making the LCAs constructed in~\cite{RTVX11b} space efficient, 
query oblivious and parallelizable.
The first is that LCAs need to remember all the random bits used in computing previous queries.   
The second issue is more subtle --
\cite{RTVX11b} give LCAs based on algorithms which use very little additional time resources per query as
they simulate greedy algorithms.   
These LCAs output results that depend directly on the orders in which queries are fed into the algorithms.   



We address the randomness issue first.
The space inefficient LCAs constructed in~\cite{RTVX11b} for the problems of concern to us are probabilistic by nature.
Consistency among answers to the queries seems to demand that the algorithm 
keeps track of all random bits used so far, which would incur linear space complexity.
A simple but very useful observation is that all the computations
are local and thus involve a very small number of random bits.
Therefore we may replace the truly random bits with random variables
of limited independence. The construction of small sample space $k$-wise 
independent random variables of Alon et al.~\cite{ABI86} allows
us to reduce the space complexity from linear to polylogarithmic.   
This allows us to prove our main theorem on the LCA for the maximal independent set problem.   
It is also an important ingredient in constructing our LCA for \emph{Hypergraph Coloring}.
We believe such a technique will be a standard tool in future development of LCAs.

For \emph{Hypergraph Coloring}, we need to also address the second issue raised above.
The original LCA for \emph{Hypergraph Coloring} in~\cite{RTVX11b} emulates Alon's algorithm~\cite{Alo91}.
Alon's algorithm runs in three phases. During the first phase,
it colors \emph{all} vertices in an \emph{arbitrary} order.
Such an algorithm looks ``global'' in nature and it is therefore non-trivial to turn it into an LCA.
In~\cite{RTVX11b}, they use the order of vertices being queried as the order of coloring
in Alon's algorithm, hence the algorithm needs to store all answers to previous queries
and requires linear space in computation.

We take a different approach to overcome this difficulty.
Observe that there is some ``local'' dependency among the colors of vertices -- namely, 
the color of any vertex depends only on the colors of at most a constant number, say $D$, other vertices. 
The colors of these vertices in turn depend on the colors of their neighboring vertices, and so on.
We can model the hypergraph coloring process by a query tree:
Suppose the color of vertex $x$ is now queried.
Then the root node of the query tree is $x$, the nodes on the first level are the vertices
whose colors the color of $x$ depends on. In general, the colors of nodes on level $i$ depends on\footnote{In fact,
they may depend on the colors of some nodes on levels lower than $i$. However, as we care only about query complexity,
we will focus on the worst case that the query relations form a tree.}
the colors of nodes on level $i+1$. 
Note that the query tree has degree bound $D$ and moreover,
the size of the query tree clearly depends on the order in which vertices are colored,
since the color of a vertex depends only on vertices that are colored before it.
In particular, if $x$ is the $\ith{k}$ vertex to be colored, then the query tree contains at most $k$ vertices.

An important fact to note is that Alon's algorithm works for \emph{any} order, in particular, it works for a random order.
Therefore we can apply the random order method of Nguyen and Onak~\cite{NO08}:
generate a random number $r\in [0,1]$, called the \emph{rank}, and use these ranks to prune the original query tree
into a \emph{random query tree $\mathcal{T}$}.
Specifically, $\mathcal{T}$ is defined recursively: 
the root of $\mathcal{T}$ is still $x$. A node $z$ is in $\mathcal{T}$ if its parent node $y$
in the original query tree is in $\mathcal{T}$ and $r(z)<r(y)$.  
Intuitively, a random query tree is small and indeed it is surprisingly small~\cite{NO08}:
the expected size of $\mathcal{T}$ is $\frac{e^D -1}{D}$, a constant! 

Therefore, if we color the vertices in the hypergraph in a random order, the \emph{expected}
number of vertices we need to color is only a constant. 
However, such an ``average case'' result is insufficient for our LCA purpose: what
we need is a ``worst case'' result which tells almost surely how large a random query tree will be.
In other words, we need a concentration result on the sizes of the random query trees.
The previous techniques in~\cite{NO08, YYI09} do not seem to work in this setting.

Consider the worst case in which the rank of the root node $x$ is $1$.
A key observation is, although there are $D$ child nodes of $x$, only the nodes whose ranks are 
close to $1$ are important, as the child nodes with smaller ranks will die out quickly.
But in expectation there will be very few important nodes!
This inspires us to partition the random query tree into $D+1$ levels based on the ranks of the nodes,
and analyze the sizes of trees on each level using the theory of branching processes.
In particular, we apply a quantitative bound on the total number of off-springs of a Galton-Watson process~\cite{Ott49}
to show that, for any $m>0$, with probability at least $1-1/m^{2}$ 
the size of a random query tree has at most $C(D)\log^{D+1}m$ vertices, where
$C(D)$ is some constant depending only on $D$.
We conjecture that the upper bound can be further reduced to $C(D)\log{m}$.

However, the random order approach raise another issue: how do we store the ranks of all vertices?
Observe that in constructing a random query tree, the actual values of the ranks are never used --
only the relative orders between vertices matter. 
This fact together with the fact that all computations are local enables us to 
replace the rank function with some pseudorandom ordering among the vertices,
see Section~\ref{Sec:orderings} for formal definition and construction.
The space complexity of the pseudorandom ordering is only polylogarithmic,
thus making the total space complexity of the LCA also polylogarithmic.

\subsection{Other related work}
{\em Locally decodable
codes} \cite{KT00} which given an encoding of a message,
provide quick access to the requested
bits of the original message, can be viewed as
LCAs.   Known constructions of LDCs
are efficient and use small space \cite{Yek11}.   LCAs generalize
the {\em reconstruction} models
described in \cite{ACC+08, CS06a,SS10, JR11}.  These models
describe scenarios where an input string that has a 
certain property, such as monotonicity, is assumed
to be corrupted at a relatively small number of
locations.  The reconstruction algorithm gives fast query access
to an uncorrupted version of the string that is close to the original
input.   Most of the works mentioned are also efficient in terms of space.

In \cite {RTVX11}, it is noted that  the model of LCAs is related to {\em local algorithms}, 
studied in the context of distributed computing \cite{NS95, MNS95, KMN+05, KMW06, KW06, KM07, K09}.  
This is due to a reduction given by Parnas Ron \cite{PR07} which allows one to 
construct (sequential) LCAs based on constant round distributed algorithms.   
Note that this relationship does not immediately yield space-efficient
local algorithms, nor does it yield sub-linear time LCAs when 
used with parallel or distributed 
algorithms whose round complexity is $O(\log{n})$.

Recent exciting developments in sublinear time algorithms for sparse graph and
combinatorial optimization problems have led to new constant time
algorithms for approximating the size of a minimum vertex cover, maximal matching, 
maximum matching, minimum dominating set, minimum set cover, packing 
and covering problems (cf. \cite{ PR07, MR06, NO08, YYI09}).
For example, for \emph{Maximal Independent Set},
these algorithms construct a constant-time oracle which for most,
but not all, vertices outputs 
whether or not the vertex is part of the independent set.  
For the above approximation algorithms, it is not necessary
to get the correct answer for each vertex, but for LCAs, which
must work for any sequence of online inputs, the requirements are
more stringent, thus the techniques are not applicable without modification.

\subsection{Organization}
The rest of the paper is organized as follows.
Some preliminaries and notations that we use throughout the paper
appear in Section~\ref{Sec:prelim}.
We then prove our main technical result, namely the bound on the sizes of random query trees
in Section~\ref{Sec:query_tree}. 
In Section~\ref{Sec:orderings} we construct pseudorandom orderings with small space.
Finally we apply the techniques developed in Section~\ref{Sec:query_tree} and Section~\ref{Sec:orderings} 
to construct LCAs for the hypergraph coloring problem
and the maximal independent set problem 
in Section~\ref{Sec:hypergraph} and Section~\ref{Sec:mis}, respectively.

\section{Preliminaries}\label{Sec:prelim}

Unless stated otherwise, all logarithms in this paper are to the base $2$.
Let $n\geq 1$ be a natural number. We use $[n]$ to denote the set $\{1,\ldots, n\}$.

All graphs in this paper are undirected graphs. Let $G=(V, E)$ be a graph.
The \emph{distance} between two vertices $u$ and $v$ in $V(G)$, denoted by
$d_{G}(u, v)$, is the length of a shortest path between the two vertices.
We write $N_{G}(v)=\{u\in V(G): (u,v)\in E(G)\}$ to denote
the neighboring vertices of $v$.   Furthermore, 
let $N^{+}_{G}(v)=N(v)\cup \{v\}$.
Let $d_{G}(v)$ denote the degree of a vertex $v$.

\subsection{Local computation algorithms}
We present our model of local computation algorithms:
Let $F$ be a computational problem and $x$ be an input to $F$.
Let $F(x) = \{y~|~ y {\rm ~is~ a~} {\rm valid~solution~}{\rm for~input~} x\}$.
The {\em search problem} is to find any $y \in F(x)$. 
\begin{definition}[$(t, s, \delta)$-local algorithms~\cite{RTVX11b}]
Let $x$ and $F(x)$ be defined as above.
A {\em $(t(n), s(n), \delta(n))$-local computation algorithm} $\mcA$
is a (randomized) algorithm which
implements query access to an arbitrary $y \in F(x)$ and satisfies the following:
$\mcA$ gets a sequence of queries $i_1,\ldots,i_q$ for any $q>0$
and after each query $i_j$ it must produce an output $y_{i_j}$
satisfying that the outputs $y_{i_1},\ldots, y_{i_q}$ are substrings of some $y \in F(x)$.
The probability of success over all $q$ queries must be at least $1-\delta(n)$.
$\mcA$ has access to a random tape and local computation memory on which
it can perform current computations as well as store and retrieve information from previous computations.
We assume that the input $x$, the local computation tape and any
random bits used are all presented in the RAM word model, i.e., 
$\mcA$ is given the ability to access a word of any of these in one step.
The running time of $\mcA$ on any query is at most $t(n)$, which is sublinear in $n$,
and the size of the local computation memory of $\mcA$ is at most $s(n)$.
Unless stated otherwise, we always assume that the error parameter 
$\delta(n)$ is at most some constant, say, $1/3$.
We say that  $\mcA$ is a {\em strongly local computation algorithm}
if both $t(n)$ and $s(n)$ are upper bounded by $\log^{c}n$ for some constant $c$.
\end{definition}

Two important properties of LCAs are as follows:
\begin{definition}[Query oblivious\cite{RTVX11b}]
We say an LCA $\mcA$ is \emph{query order oblivious} (\emph{query oblivious} for short)
if the outputs of $\mcA$ do not depend on the order of the queries but
depend only on the input and the random bits generated on the random tape of $\mcA$.
\end{definition}

\begin{definition}[Parallelizable\cite{RTVX11b}]
We say an LCA $\mcA$ is \emph{parallelizable} if $\mcA$ 
supports parallel queries, that is the LCA is able to answer multiple queries 
simultaneously so that all the answers are consistent.
\end{definition}
\subsection{\texorpdfstring{$k$-wise independent random variables}{k-wise independent random variables}}
\label{Sec:kwi}
Let $1\leq k \leq n$ be an integer.
A distribution $D:\{0,1\}^{n} \to \R^{\geq 0}$ is \emph{$k$-wise independent}
if restricting $D$ to any index subset $S\subset [n]$ of size at most $k$ 
gives rise to a uniform distribution.
A random variable is said to be $k$-wise independent if its distribution function is $k$-wise independent.
Recall that the support of a distribution $D$, denoted $\supp{D}$,
is the set of points at which $D(x)>0$.
We say a discrete distribution $D$ is \emph{symmetric}
if $D(x)=1/|\supp{D}|$ for every $x\in \supp{D}$.
If a distribution $D:\{0,1\}^{n} \to \R^{\geq 0}$ is symmetric with 
$|\supp{D}|\leq 2^{m}$ for some $m\leq n$, then we may index the elements in the support of $D$ 
by $\{0,1\}^{m}$ and call $m$ the \emph{seed length} of the random variable whose distribution is $D$.
We will need the following construction of $k$-wise independent random variables
over $\{0,1\}^{n}$ with small symmetric sample space.

\begin{theorem}[\cite{ABI86}]\label{thm:ABI86}
For every $1\leq k \leq n$, there exists a symmetric distribution $D:\{0,1\}^{n} \to \R^{\geq 0}$
of support size at most $n^{\lfloor \frac{k}{2} \rfloor}$ and is $k$-wise independent.
That is, there is a $k$-wise independent random variable $x=(x_1, \ldots, x_n)$ 
whose seed length is at most $O(k\log n)$. Moreover, for any $1\leq i \leq n$,
$x_i$ can be computed in space $O(k\log n)$.
\end{theorem}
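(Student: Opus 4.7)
My plan is to construct the distribution explicitly via trace evaluation in a small extension field of $\F_{2}$, following the BCH-code-based approach. Let $m=\lceil\log(n+1)\rceil=O(\log n)$, so that $\F_{2^m}$ contains at least $n$ distinct nonzero elements $\alpha_1,\dots,\alpha_n$; fix an identification of $[n]$ with $\{\alpha_1,\dots,\alpha_n\}$. Set $t=\lfloor k/2\rfloor$ and, using a uniform seed $\beta=(\beta_1,\dots,\beta_t)\in\F_{2^m}^{t}$ of length $tm=O(k\log n)$, define
\[
x_i \;=\; \mathrm{Tr}\!\Bigl(\sum_{j=1}^{t}\beta_j\,\alpha_i^{2j-1}\Bigr),
\]
where $\mathrm{Tr}:\F_{2^m}\to\F_{2}$ is the absolute trace. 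The induced sample space has size at most $2^{tm}\leq n^{\lfloor k/2\rfloor}$ (up to a mild adjustment of $m$), and because each $x_i$ is an $\F_{2}$-linear function of $\beta$, the output distribution is uniform on its image (every point in the image has a preimage of size equal to the kernel), hence symmetric.

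The core step is verifying $k$-wise independence. Fix any $S=\{i_1,\dots,i_k\}\subseteq[n]$ and consider the $\F_{2}$-linear map $\Phi_S:\beta\mapsto(x_{i_1},\dots,x_{i_k})$. Since $\Phi_S$ is linear, its pushforward is uniform on $\F_{2}^{k}$ iff $\Phi_S$ is surjective, which I would prove by contradiction: if some nonzero $(c_1,\dots,c_k)\in\F_{2}^{k}$ satisfies $\sum_{j}c_j x_{i_j}\equiv 0$ as a function of $\beta$, then bringing the $c_j$'s inside the trace and using the non-degeneracy of $\mathrm{Tr}$ forces $\sum_{j=1}^{k}c_j\,\alpha_{i_j}^{2l-1}=0$ for each $l=1,\dots,t$. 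Applying the Frobenius $y\mapsto y^{2}$ to these identities repeatedly, and using $c_j^{2}=c_j$ in characteristic two, extends the system to $\sum_{j=1}^{k}c_j\,\alpha_{i_j}^{r}=0$ for every $r\in\{1,2,\dots,2t\}=\{1,\dots,k\}$. These $k$ equations form a Vandermonde system in the unknowns $c_j$ with distinct nonzero nodes $\alpha_{i_j}$, whose only solution is $c\equiv 0$, contradicting the assumption.

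For the space bound on computing a single $x_i$, I would represent $\F_{2^m}$ by arithmetic modulo a fixed irreducible polynomial of degree $m$ over $\F_{2}$. Each power $\alpha_i^{2j-1}$ can then be produced by repeated squaring in $O(m)$ workspace, the inner sum $\sum_{j}\beta_j\alpha_i^{2j-1}$ accumulates in another $O(m)$ bits, and the final trace is computed by $m-1$ squarings-and-additions in $\F_{2^m}$. The seed itself occupies $tm=O(k\log n)$ bits, so the total workspace is $O(k\log n)$; since the same routine is used for every $i$ and depends only on the seed, there is no per-query state to carry between answers.

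The main obstacle I anticipate is the bootstrapping step in the independence proof: one has to manufacture $k$ independent Vandermonde-style identities from only $t=\lfloor k/2\rfloor$ trace constraints, and this is precisely where the characteristic-two identity $c_j^{2}=c_j$ enters essentially---over $\F_{p}$ for odd $p$ the same construction would not yield $2t$-wise independence. The remaining ingredients (the choice of $m$, the representation of $\F_{2^m}$, and the implementation of field arithmetic in $O(\log n)$ space) are standard.
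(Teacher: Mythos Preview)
The paper does not prove this theorem at all; it is quoted from \cite{ABI86} in the preliminaries and used as a black box, so there is no ``paper's own proof'' to compare against. What you have written is essentially the original Alon--Babai--Itai construction (the dual-BCH/trace construction), and the argument is sound for even $k$: the linearity of $\beta\mapsto x$ gives symmetry of the output distribution, non-degeneracy of $\mathrm{Tr}$ turns a putative linear relation into the $t$ power-sum identities at odd exponents $1,3,\dots,2t-1$, and repeated Frobenius (using $c_j^2=c_j$) extends these to all exponents $1,\dots,2t$, after which the Vandermonde determinant finishes it. The space analysis is also fine.

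One genuine, if minor, gap: your write-up silently assumes $k=2t$ when you equate $\{1,\dots,2t\}$ with $\{1,\dots,k\}$. For odd $k=2t+1$ the construction with $t=\lfloor k/2\rfloor$ field elements yields only $2t$-wise independence by your argument. The standard fix is to adjoin one additional uniform bit $\beta_0\in\F_2$ and set $x_i=\beta_0+\mathrm{Tr}(\sum_{j}\beta_j\alpha_i^{2j-1})$; then any nontrivial relation also forces $\sum_j c_j=0$, giving the exponent-$0$ equation and hence $2t+1=k$ Vandermonde rows. This costs one extra seed bit and does not affect the $O(k\log n)$ bounds. You should also note that the stated support bound $n^{\lfloor k/2\rfloor}$ is exact only when $n+1$ is a power of two; otherwise $2^{tm}\le (2n)^{t}$, which still gives seed length $O(k\log n)$ but not literally $n^{\lfloor k/2\rfloor}$---this is a looseness in the theorem statement rather than in your argument.
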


\section{Bounding the size of a random query tree}\label{Sec:query_tree}

\subsection{The problem and our main result}
Consider the following scenario which was first studied by~\cite{NO08}
in the context of constant-time approximation algorithms for maximal matching and some other problems.
We are given a graph $G=(V, E)$ of bounded degree $D$.
A real number $r(v)\in [0,1]$ is assigned independently and uniformly at random 
to every vertex $v$ in the graph.
We call this random number the \emph{rank} of $v$.
Each vertex in the graph $G$ holds an input $x(v) \in R$, where the range $R$ is some finite set.
A randomized Boolean function $F$ is defined inductively on
the vertices in the graph such that $F(v)$ is a function of the input $x(v)$ at $v$
as well as the values of $F$ at the neighbors $w$ of $v$ for which $r(w)<r(v)$. 
The main question is, in order to compute $F(v_{0})$ for any vertex $v_{0}$ in $G$,
how many queries to the inputs of the vertices in the graph are needed?

Here, for the purpose of upper bounding the query complexity,
we may assume for simplicity that the graph $G$ is $D$-regular and furthermore,
$G$ is an infinite $D$-regular tree rooted at $v_{0}$.
It is easy to see that making such modifications to $G$ 
can never decrease the query complexity of computing $F(v_{0})$.

Consider the following question.
We are given an infinite $D$-regular tree $\mathcal{T}$ rooted at $v_{0}$.
Each node $w$ in $\mathcal{T}$ is assigned independently and uniformly at random a real number
$r(w)\in [0,1]$.
For every node $w$ other than $v_{0}$ in $\mathcal{T}$,
let $\parent{w}$ denote the parent node of $w$.
We grow a (possibly infinite) subtree $T$ of $\mathcal{T}$ rooted at $v$ as follows:
a node $w$ is in the subtree $T$ if and only if $\parent{w}$ is in $T$ and 
$r(w)<r(\parent{w})$ (for simplicity we assume all the ranks are distinct real numbers).
That is, we start from the root $v$, add all the children of $v$ whose ranks are smaller than that
of $v$ to $T$. We keep growing $T$ in this manner where a node $w'\in T$ is a leaf node in $T$ 
if the ranks of its $D$ children are all larger than $r(w')$. 
We call the random tree $T$ constructed in this way a \emph{query tree} and 
we denote by $|T|$ the random variable that corresponds to the size of $T$.
We would like to know what are the typical values of $|T|$.

Following~\cite{NO08, Ona10}, we have that, for any node $w$ that is
at distance $t$ from the root $v_{0}$,  
$\Pr[w\in T]$=1/(t+1)! as such an event happens if and only if the ranks of the
$t+1$ nodes along the shortest path from $v_{0}$ to $w$ is in monotone decreasing order.
It follows from linearity of expectation that the expected value of $|T|$ is 
given by the elegant formula~\cite{Ona10}
\[
\Exp[|T|]=\sum_{t=0}^{\infty}\frac{D^t}{(t+1)!}=\frac{e^D -1}{D},
\]
which is a constant depending only on the degree bound $D$.

Our main result in this section can be regarded as showing that in fact $|T|$ is highly concentrated 
around its mean:

\begin{theorem}\label{thm:query_tree}
For any degree bound $D\geq 2$, there is a constant $C(D)$ which depends on $D$ only such that 
for all large enough $N$,
\[
\Pr[|T|>C(D)\log^{D+1}N]<1/N^2.
\]
\end{theorem}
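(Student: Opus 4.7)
My plan is to partition the rank interval $[0,1]$ into $D+1$ equal sub-intervals $I_j = (j/(D+1), (j+1)/(D+1)]$ for $j = 0, 1, \ldots, D$, and to stratify the query tree by setting $T_j = \{w \in T : r(w) \in I_j\}$, so that $|T| = \sum_{j=0}^{D} |T_j|$. Because ranks strictly decrease along every root-to-leaf path of $T$, if $w \in T_j$ then its parent in $T$ must lie in $T_k$ for some $k \geq j$; consequently $T_D$ is a single (possibly empty) subtree rooted at $v_0$, while for $j<D$ the set $T_j$ is a forest whose ``root'' nodes are precisely those $w \in T_j$ whose parent in $T$ lies in some strictly higher level $T_k$ with $k>j$.

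The core observation is that each connected component of $T_j$ is stochastically dominated by a subcritical Galton--Watson branching process. Indeed, a node $w \in T_j$ with rank $r(w)$ has each of its $D$ children in $\mathcal{T}$ fall into $T_j$ independently with probability $r(w) - j/(D+1) \leq 1/(D+1)$, so its offspring distribution within level $j$ is dominated by $\mathrm{Binomial}(D, 1/(D+1))$, of mean $D/(D+1) < 1$. Applying the classical exponential-tail estimate for the total progeny of a subcritical Galton--Watson process with bounded offspring (the Otter bound~\cite{Ott49}) then yields constants $c_1(D), c_2(D) > 0$ such that each component of $T_j$ has size greater than $t$ with probability at most $c_1 e^{-c_2 t}$.

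I would then carry out a downward induction on $j$, maintaining the high-probability bound $|T_j| \leq C(D)^{D-j+1} \log^{D-j+1} N$. The base case $j = D$ follows directly by applying the Otter tail to the single subtree $T_D$ with $t = \Theta(\log N)$. For the inductive step, the number of root components of $T_j$ is dominated by $\mathrm{Binomial}(D\,|T_{>j}|, 1/(D+1))$---since each child in $\mathcal{T}$ of a node in $T_{>j}$ lands in $I_j$ with probability $1/(D+1)$ and any such child is automatically in $T_j$---hence at most $D\,|T_{>j}|$ deterministically, and a union bound of the Otter tail over these root components gives $|T_j| \leq D\,|T_{>j}|\cdot C(D)\log N$. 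Summing over the $D+1$ levels then yields $|T| = O(\log^{D+1} N)$ with the required failure probability.

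The main obstacle I anticipate is orchestrating the union bounds so that the cumulative failure probability stays below $1/N^2$. The cleanest way is to pick $C(D)$ large enough that each individual Galton--Watson component satisfies the Otter tail with probability at most $N^{-4}$; since the inductive bound shows the total number of components, summed over all $D+1$ levels, is only $\mathrm{poly}(\log N)$ with overwhelming probability, a single union bound absorbs everything comfortably. A secondary subtlety is that within each level the offspring distribution of a node depends on the parent's precise rank within $I_j$ and so is not exactly i.i.d.\ across generations of the component, but stochastic domination by the worst-case $\mathrm{Binomial}(D, 1/(D+1))$ offspring law sidesteps this cleanly.
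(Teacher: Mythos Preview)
Your proposal is correct and takes essentially the same approach as the paper: partition $[0,1]$ into $D+1$ equal sub-intervals, dominate each within-level component by a subcritical Galton--Watson process with $\mathrm{Binomial}(D,1/(D+1))$ offspring, invoke Otter's theorem for the exponential tail (the paper's Lemma~\ref{lemma:tree_size}), and cascade the $O(\log N)$ per-component bound through the levels via a union bound to reach $O(\log^{D+1}N)$. The only differences are cosmetic---your levels are indexed $0,\ldots,D$ from low ranks to high while the paper indexes $1,\ldots,D+1$ from high to low, and the paper phrases the stochastic-domination step as ``drop the constraint $r(w)<r(v)$'' to build an explicit dominating tree $T'_1$ rather than invoking domination directly---but the structure of the argument is identical.
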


\subsection{Breaking the query tree into levels}
A key idea in the proof is to break the query tree into levels and then upper bound the 
sizes of the subtrees on each level separately.
First partition the interval $[0,1]$ into $D+1$ sub-intervals:
$I_{i}:=(1-\frac{i}{D+1}, 1-\frac{i-1}{D+1}]$ for $i=1,2, \ldots, D$ and $I_{D+1}=[0, \frac{1}{D+1}]$.
We then decompose the query tree $T$ into $D+1$ levels such that
a node $v\in T$ is said to be on level $i$ if $r(v)\in I_{i}$.
For ease of exposition, in the following we consider the worst case that $r(v_{0})\in I_{1}$.
Then the vertices of $T$ on level $1$ form a tree which we call $T_{1}=T_{1}^{(1)}$ rooted at $v_{0}$.
The vertices of $T$ on level $2$ will in general form a set of trees 
$\{T_{2}^{(1)}, \ldots, T_{2}^{(m_2)} \}$, 
where the total number of such trees $m_{2}$ 
is at most $D$ times the number of nodes in $T_{1}$ (we have
only inequality here because some of the child nodes in $\mathcal{T}$ of the nodes
in $T_{1}$ may fall into levels $2$, $3$, etc).
Finally the nodes on level $D+1$ form a forest 
$\{T_{D+1}^{(1)}, \ldots, T_{D+1}^{(m_{\scriptscriptstyle{D+1}})} \}$.
Note that all these trees $\{T_{i}^{(j)}\}$ are generated by the same stochastic process,
as the ranks of all nodes in $\mathcal{T}$ are i.i.d. random variables. 
The next lemma shows that  
each of the subtrees on any level is of size $O(\log N)$ with probability at least $1-1/N^3$,

\begin{lemma}\label{lemma:tree_size}
For any $1\leq i \leq D+1$ and any $1\leq j \leq m_{i}$,
with probability at least $1-1/N^3$, $|T_{i}^{(j)}|=O(\log N)$.
\end{lemma}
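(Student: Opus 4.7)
The plan is to show that $|T_i^{(j)}|$ is stochastically dominated by the total progeny of a subcritical Galton-Watson branching process, and then to invoke a quantitative tail bound for such processes to conclude that $|T_i^{(j)}| = O(\log N)$ holds except with probability at most $1/N^3$.

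First I would identify the effective branching rule inside the subtree. Fix a level $i$ and a subtree $T_i^{(j)}$ with root $u$, where $r(u) \in I_i = (1 - \tfrac{i}{D+1}, 1 - \tfrac{i-1}{D+1}]$. A node $w \in T_i^{(j)}$ has $D$ children in $\mathcal{T}$, and such a child $w'$ lies in $T_i^{(j)}$ precisely when $r(w') < r(w)$ and $r(w') \in I_i$. Since the interval $(1 - \tfrac{i}{D+1},\, r(w)]$ has length at most $|I_i| = 1/(D+1)$ and the ranks are i.i.d.\ uniform on $[0,1]$, each child independently joins the subtree with probability at most $1/(D+1)$, regardless of $r(w)$. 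Crucially, the ranks of descendants not yet explored are independent of the exploration history, so this analysis can be carried out one node at a time without problematic conditioning.

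Next I would couple $T_i^{(j)}$ with a Galton-Watson process $Z$ in which every individual independently produces $\mathrm{Bin}(D, 1/(D+1))$ offspring: at each step, a child of $w$ that fails to enter $T_i^{(j)}$ can, if needed, be reinstated into $Z$ with an appropriate independent coin to bring the success probability exactly up to $1/(D+1)$. This yields $|T_i^{(j)}| \le |Z|$ almost surely, and the offspring mean $\mu = D/(D+1) < 1$ makes $Z$ strictly subcritical. To bound $|Z|$, I would apply the Otter-Dwass identity $\Pr[|Z| = n] = \tfrac{1}{n}\Pr[S_n = n-1]$, where $S_n$ is the sum of $n$ i.i.d.\ $\mathrm{Bin}(D, 1/(D+1))$ variables. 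Since $\Exp[S_n] = nD/(D+1)$ is strictly less than $n-1$ once $n > D+1$, a Chernoff bound gives $\Pr[S_n = n-1] \le e^{-\gamma n}$ for a constant $\gamma = \gamma(D) > 0$. Summing over $n \ge k$ and choosing $k = (3/\gamma)\ln N$ yields $\Pr[|T_i^{(j)}| \ge k] \le 1/N^3$, which is $O(\log N)$ as claimed.

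The main obstacle I anticipate is making the stochastic domination fully rigorous in the presence of the level decomposition: the root $u$ of $T_i^{(j)}$ is itself random, and one must ensure that the chain of conditioning used to isolate this subtree from the rest of $T$ does not disturb the independence of the ranks of the as-yet-unexplored descendants below $u$. The cleanest route is probably to analyze the subtree rooted at an arbitrary fixed node $u \in \mathcal{T}$ conditioned on $r(u) \in I_i$, whose size stochastically dominates $|T_i^{(j)}|$ by the Markov-like nature of the exploration, and to observe that the quantitative bound depends on $r(u)$ only through its membership in $I_i$. The remaining work is bookkeeping and a standard Chernoff calculation, with the final constant absorbed into $\gamma(D)$.
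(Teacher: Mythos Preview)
Your proposal is correct and follows the same overall strategy as the paper: both stochastically dominate $|T_i^{(j)}|$ by the total progeny of a Galton--Watson process with $\mathrm{Bin}(D,1/(D+1))$ offspring distribution (mean $D/(D+1)<1$), and then show this progeny has an exponential tail. The paper obtains the domination slightly more crudely, by simply dropping the constraint $r(w')<r(w)$ and keeping only $r(w')\in I_i$; this sidesteps the conditioning issue you flag in your last paragraph, since then each child independently joins with probability \emph{exactly} $1/(D+1)$ regardless of the parent's rank.

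The genuine difference is in how the exponential tail is extracted. The paper invokes Otter's classical asymptotic (their Theorem~\ref{thm:Otter}), which requires solving $af'(a)=f(a)$ for the generating function $f(s)=(1-q+qs)^D$, verifying $\alpha=a/f(a)>1$, and checking the non-degeneracy of the leading coefficient. Your route via the Dwass/Otter--Dwass hitting-time identity $\Pr[|Z|=n]=\tfrac{1}{n}\Pr[S_n=n-1]$ followed by a Chernoff bound is more elementary and self-contained: it avoids any analytic machinery and yields explicit constants directly. On the other hand, the paper's approach gives a sharper asymptotic $\Pr[|Z|=n]\sim C\alpha^{-n}n^{-3/2}$, though that extra precision is not needed here. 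Either argument suffices for the lemma.
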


One can see that Theorem~\ref{thm:query_tree} follows directly from Lemma~\ref{lemma:tree_size}:
Once again we consider the worst case that $r(v_{0})\in I_{1}$.
By Lemma~\ref{lemma:tree_size}, the size of $T_{1}$
is at most $O(\log N)$ with probability at least $1-1/N^3$.
In what follows, we always condition our argument upon that
this event happens.
Notice that the root of any tree on level $2$ must have some node in $T_{1}$
as its parent node; it follows that $m_2$, 
the number of trees on level $2$, is at most $D$ times the size of $T_{1}$, 
hence $m_{2}=O(\log N)$.
Now applying Lemma~\ref{lemma:tree_size} to each of the $m_{2}$ trees on level $2$
and assume that the high probability event claimed in Lemma~\ref{lemma:tree_size} 
happens in each of the subtree cases,
we get that the total number of nodes at level $2$ is at most $O(\log^{2}N)$.
Once again, any tree on level $3$ must have some node in either level $1$ or level $2$
as its parent node, 
so the total number of trees on level $3$ is also at most 
$D(O(\log N)+O(\log^{2}N))=O(\log^{2}N)$.
Applying this argument inductively, 
we get that $m_{i}=O(\log^{i-1}N)$ for $i=2,3,\ldots, D+1$.
Consequently, the total number of nodes at all $D+1$ levels is at most
$O(\log N) + O(\log^{2}N)+ \cdots + O(\log^{D+1}N)=O(\log^{D+1}N)$,
assuming the high probability event in Lemma~\ref{lemma:tree_size}
holds for all the subtrees in all the levels.
By the union bound, this happens with probability at least $1-O(\log^{D+1}N)/N^{3}>1-1/N^{2}$,
thus proving Theorem~\ref{thm:query_tree}.

The proof of Lemma~\ref{lemma:tree_size} requires results in
branching processes, in particular the Galton-Watson processes.

\subsection{Galton-Watson processes}
Consider a Galton-Watson process defined by the probability function
$\mathbf{p}:=\{p_{k}; k=0,1,2, \ldots \}$, with $p_{k}\geq 0$ and $\sum_{k}p_{k}=1$.
Let $f(s)=\sum_{k=0}^{\infty}p_{k}s^{k}$
be the generating function of $\mathbf{p}$.
For $i=0, 1, \ldots,$ let $Z_{i}$ be the number of off-springs in the $\ith{i}$ generation.
Clearly $Z_{0}=1$ and $\{Z_{i}: i=0, 1, \ldots \}$ form a Markov chain.
Let $m:=\Exp[Z_{1}]=\sum_{k}kp_{k}$ be the expected number of children of any individual.
The classical result of the Galton-Watson processes
is that the \emph{survival probability} (namely $\lim_{n\to \infty}\Pr[Z_{n}>0]$)
is zero if and only if $m\leq 1$.
Let $Z=Z_{0}+Z_{1}+\cdots$ be the sum of all off-springs in all generations of the Galton-Watson process.
The following result of Otter is useful in bounding the probability that $Z$ is large.

\begin{theorem}[\cite{Ott49}]\label{thm:Otter}
Suppose $p_{0}>0$ and that there is a point $a>0$ within the circle of convergence of $f$
for which $af'(a)=f(a)$. Let $\alpha=a/f(a)$. 
Let $t=\mathrm{gcd}\{r: p_{r}>0\}$, where $\mathrm{gcd}$ stands for greatest common divisor.
Then

\begin{align} \label{eqn:Otter}
\Pr[Z=n]=\begin{cases}
  t\left(\frac{a}{2\pi \alpha f''(a)} \right)^{1/2}\alpha^{-n}n^{-3/2} +O(\alpha^{-n}n^{-5/2}), 
    &  \text{if $n \equiv 1 \pmod{t}$;} \\
  0, &  \text{if $n \not\equiv 1 \pmod{t}$.}
\end{cases}
\end{align}
\end{theorem}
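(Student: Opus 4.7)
The plan is to apply the Lagrange inversion formula together with a saddle-point analysis of the resulting contour integral. First I would derive the functional equation $F(s) = s\,f(F(s))$ for the total-progeny generating function $F(s) = \sum_{n \geq 1}\Pr[Z=n]\,s^n$; this reflects the recursive structure of the Galton--Watson tree, since the total progeny of the root is itself plus the independent total progenies rooted at each of its children. Lagrange inversion applied to $F(s)$ then yields the exact identity
\[
\Pr[Z=n] \;=\; \frac{1}{n}\,[u^{n-1}]\,f(u)^n \;=\; \frac{1}{2\pi i\,n}\oint_{|u|=a} \frac{f(u)^n}{u^n}\,du,
\]
which is well-defined because $a$ lies inside the circle of convergence of $f$.

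Next I would perform a saddle-point analysis on the contour $|u|=a$. The integrand equals $\exp(n\,\Phi(u))$ with $\Phi(u)=\log f(u)-\log u$, and the stationarity condition $\Phi'(u)=0$ is precisely $uf'(u)=f(u)$, which by hypothesis holds at $u=a$ with $\Phi(a)=-\log\alpha$ and $\Phi''(a)=f''(a)/f(a)$. Parameterizing $u = ae^{i\theta}$ and Taylor-expanding,
\[
\Phi(ae^{i\theta}) \;=\; -\log\alpha \;-\; \frac{a^2 f''(a)}{2\,f(a)}\,\theta^2 \;+\; O(\theta^3),
\]
and combining this with $du = iae^{i\theta}d\theta$ and the Gaussian integral delivers the leading contribution $\alpha^{-n} n^{-3/2}\bigl(a/(2\pi\alpha f''(a))\bigr)^{1/2}$. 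Retaining the next even-order term in the local expansion (odd powers integrate to zero by symmetry) yields the claimed $O(\alpha^{-n} n^{-5/2})$ correction. To justify that the local saddle contribution dominates, I need a uniform strict bound $|f(u)/u| < \alpha^{-1}$ on the part of $|u|=a$ outside an $O(n^{-1/2+\varepsilon})$ neighborhood of each saddle; this uses $p_0>0$, which prevents $f$ from being a monomial and, together with the non-negativity of the coefficients of $f$ and the maximum-modulus principle, forces the inequality to be strict away from the saddle points.

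Finally I would handle the periodicity. When $t = \gcd\{r : p_r > 0\} > 1$, one has $f(u)=g(u^t)$ for a power series $g$, so the circle $|u|=a$ carries $t$ equivalent saddle points $u_k = a e^{2\pi ik/t}$. The same local Gaussian computation near $u_k$, after the change of variables $u = u_k e^{i\phi}$ and the use of $f(u_k e^{i\phi})=f(ae^{i\phi})$, reproduces the contribution from $u_0=a$ multiplied by $e^{-2\pi i k (n-1)/t}$ (the exponent $n-1$ arises because $u^n$ in the denominator contributes $n$ powers of $u_k$ while $du$ contributes one). Summing over $k=0,\ldots,t-1$ produces the geometric sum $\sum_{k=0}^{t-1} e^{-2\pi i k(n-1)/t}$, which evaluates to $t$ when $n\equiv 1\pmod t$ and to $0$ otherwise, yielding exactly the stated formula together with its vanishing. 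The main technical hurdle will be to push the saddle-point error bound rigorously down to $O(\alpha^{-n} n^{-5/2})$ rather than a weaker $o(\alpha^{-n} n^{-3/2})$: this requires carefully tracking the cubic and quartic terms in the local expansion of $\Phi$, verifying by parity that the $\theta^3$ contribution vanishes, and supplying a genuinely exponential tail bound on the portion of the contour bounded away from the critical arcs. A clean packaging of these estimates is provided by the standard Darboux-type transfer theorems of analytic combinatorics, which I would invoke to streamline the error analysis.
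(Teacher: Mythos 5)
This statement is not proved in the paper at all: it is Otter's theorem, imported verbatim as a black-box citation to \cite{Ott49}, so there is no in-paper argument to compare yours against. Judged on its own, your outline is a correct reconstruction of the standard proof (essentially Otter's original route in modern saddle-point packaging): the functional equation $F(s)=s\,f(F(s))$ for the total-progeny generating function, Lagrange inversion to get $\Pr[Z=n]=\tfrac1n[u^{n-1}]f(u)^n$, and a saddle-point evaluation of the Cauchy integral at $u=a$. Your constants check out: $\Phi'(a)=0$ is exactly $af'(a)=f(a)$, $\Phi''(a)=f''(a)/f(a)$, and the Gaussian integral gives $\bigl(f(a)/(2\pi f''(a))\bigr)^{1/2}n^{-3/2}\alpha^{-n}=\bigl(a/(2\pi\alpha f''(a))\bigr)^{1/2}n^{-3/2}\alpha^{-n}$ since $f(a)=a/\alpha$; the treatment of the $t$ conjugate saddles and the resulting geometric sum is also right. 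Three small points to tighten. First, the exact vanishing of $\Pr[Z=n]$ for $n\not\equiv1\pmod t$ should not be left to the asymptotic cancellation of the saddle contributions: it is immediate and exact from your own formula, since $f(u)^n=g(u^t)^n$ is a power series in $u^t$, so $[u^{n-1}]f(u)^n=0$ unless $t\mid n-1$. Second, the strict bound $|f(u)|<f(a)$ away from the points $u_k$ is not the maximum-modulus principle but the triangle-inequality argument for power series with non-negative coefficients (equality forces all phases $e^{ir\theta}$ with $p_r>0$ to coincide, and $p_0>0$ pins the common phase to $1$); and the late appeal to Darboux/transfer theorems is a different method (singularity analysis of $F$ at $s=\alpha$, which also works) rather than a packaging of the saddle-point error. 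Third, the formula tacitly requires $f''(a)\neq 0$ (i.e., $f$ not affine); this is implicit in the statement and is verified separately where the paper applies the theorem.
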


In particular, if the process is \emph{non-arithmetic}, i.e. $\mathrm{gcd}\{r: p_{r}>0\}=1$,
and $\frac{a}{\alpha f''(a)}$ is finite, then
\[
\Pr[Z=n]=O(\alpha^{-n}n^{-3/2}),
\]
and consequently $\Pr[Z\geq n]=O(\alpha^{-n})$.

\subsection{Proof of Lemma~\ref{lemma:tree_size}}
To simplify exposition, we prove Lemma~\ref{lemma:tree_size} for the case of tree $T_{1}$.
Recall that $T_{1}$ is constructed recursively as follows: 
for every child node $v$ of $v_{0}$ in $\mathcal{T}$, we add $v$ to $T_{1}$ if
$r(v)<r(v_{0})$ and $r(v) \in I_{1}$.
Then for every child node $v$ of $v_{0}$ in $T_{1}$, we add the child node $w$ of $v$ in $\mathcal{T}$
to $T_{1}$ if $r(w)<r(v)$ and $r(w) \in I_{1}$.
We repeat this process until there is no node that can be added to $T_{1}$.

Once again, we work with the worst case that $r(v_{0})=1$.
To upper bound the size of $T_{1}$, we consider a related random process which also 
grows a subtree of $\mathcal{T}$ rooted at $v_{0}$, and denote it by $T'_{1}$.
The process that grows $T'_{1}$ is the same as that of $T_{1}$ 
except for the following difference: if $v\in T'_{1}$ and $w$ is a child node of $v$ in $\mathcal{T}$, 
then we add $w$ to $T'_{1}$ as long as $r(w) \in I_{1}$, 
but give up the requirement that $r(w)<r(v)$.  
Clearly, we always have $T_{1}\subseteq T'_{1}$ and hence $|T'_{1}|\geq |T_{1}|$.
 
Note that the random process that generates $T'_{1}$ is in fact a Galton-Watson process,
as the rank of each node in $\mathcal{T}$ is independently and uniformly distributed in $[0,1]$.
Since $|I_{1}|=1/(D+1)$, the probability function is
\[
\mathbf{p}=\{(1-q)^D, \binom{D}{1}q(1-q)^{D-1}, \binom{D}{2}q^{2}(1-q)^{D-2}, \ldots, q^{D}\},
\]
where $q:=1/(D+1)$ is the probability that a child node in $\mathcal{T}$
appears in $T'_{1}$ when its parent node is in $T'_{1}$.
Note that the expected number of children of a node in $T'_{1}$ is $Dq=D/(D+1)<1$,
so the tree $T'_{1}$ is a finite tree with probability one.

The generating function of $\mathbf{p}$ is 
\[
f(s)=(1-q+qs)^{D},
\]
as the probability function $\{p_{k}\}$ obeys the binomial distribution $p_{k}=b(k,D,q)$.
In addition, the convergence radius of $f$ is $\rho=\infty$ 
since $\{p_{k}\}$ has only a finite number of non-zero terms.

Solving the equation $af'(a)=f(a)$ yields $a=\frac{1-q}{q(D-1)}=\frac{D}{D-1}$.
It follows that (since $D\geq 2$)
\[
f''(a)=q^{2}D(D-1)\left(1-q+\frac{1-q}{D-1}\right)^{D-2}>0,
\]
hence the coefficient in $(\ref{eqn:Otter})$ is non-singular.

Let $\alpha(D):=a/f(a)=1/f'(a)$, then 
\begin{align*}
1/\alpha(D) =f'(a)
&=\frac{D}{D+1}(\frac{D^2}{D^{2}-1})^{D-1} \\
&=(1+\frac{1}{D^{2}-1})^{(D^{2}-1)/(D+1)}\frac{D}{D+1} \\
&<e^{1/(D+1)}\frac{D}{D+1} \\
&<\left((1+\frac{1}{D})^{D+1}\right)^{1/(D+1)}\frac{D}{D+1} \\
&=1,
\end{align*}
where in the third and the fourth steps we use
the inequality (see e.g. ~\cite{Mis70}) that $(1+\frac{1}{t})^{t} <e < (1+\frac{1}{t})^{t+1}$
for any positive integer $t$.
This shows that $\alpha(D)$ is a constant greater than $1$. 

Now applying Theorem~\ref{thm:Otter} to the Galton-Watson process
which generates $T'_{1}$ (note that $t=1$ in our case) gives that, for all large enough $n$,
$\Pr[|T'_{1}|=n]\leq 2^{-cn}$ for some constant $c$.
It follows that $\Pr[|T'_{1}|\geq n]\leq \sum_{i=n}^{\infty}2^{-ci}\leq 2^{-\Omega(n)}$
for all large enough $n$.
Hence for all large enough $N$, with probability at least $1-1/N^3$, $|T_{1}|\leq |T'_{1}|=O(\log N)$.
This completes the proof of Lemma~\ref{lemma:tree_size}.

\section{\texorpdfstring{Construction of almost $k$-wise independent random orderings}
{Construction of almost k-wise independent random orderings}}
\label{Sec:orderings}
An important observation that enables us to make some of our local algorithms run
in polylogarithmic space is the following. 
In the construction of a random query tree $\mathcal{T}$,
we do not need to generate a random real number $r(v)\in [0,1]$
independently for each vertex $v\in \mathcal{T}$; instead only the \emph{relative orderings}
among the vertices in $\mathcal{T}$ matter. 
Indeed, when generating a random query tree, we only compare the
ranks between a child node $w$ and its parent node $v$ to see if $r(w)<r(v)$;
the absolute values of $r(w)$ and $r(v)$ are irrelevant and 
are used only to facilitate our analysis in Section~\ref{Sec:query_tree}.
Moreover, since (almost surely) all our computations in the local algorithms involve only
a very small number of, say at  most $k$, vertices, 
so instead of requiring a random source that 
generates total independent random ordering among all nodes in the graph,
any pseudorandom generator that produces \emph{$k$-wise independent random ordering} 
suffices for our purpose. We now give the formal definition of such orderings.

Let $m\geq 1$ be an integer. 
Let $\mathcal{D}$ be any set with $m$ elements.
For simplicity and without loss of generality,
we may assume that $\mathcal{D}=[m]$.
Let $\mathcal{R}$ be a totally ordered set.
An \emph{ordering} of $[m]$ is an injective function $r: [m] \to \mathcal{R}$.
Note that we can \emph{project} $r$ to an element 
in the symmetric permutation group $\mathcal{S}_{m}$ in a natural way:
arrange the elements $\{r(1), \ldots, r(m)\}$ in $\mathcal{R}$
in the monotone increasing order and call the permutation of $[m]$ corresponding to
this ordering the \emph{projection of $r$} onto $\mathcal{S}_{m}$ and denote it by $P_{\mathcal{S}_{m}}r$.
In general the projection $P_{\mathcal{S}_{m}}$ is not injective.
Let $\mathbf{r}=\{r_{i}\}_{i\in I}$ be any family of orderings indexed by $I$.
The \emph{random ordering $D_{\mathbf{r}}$} of $[m]$ is a distribution over 
a family of orderings $\mathbf{r}$.
For any integer $2\leq k \leq m$, we say
a random ordering $D_{\mathbf{r}}$ is \emph{$k$-wise independent}
if for any subset $S\subseteq [m]$ of size $k$,
the restriction of the projection onto $\mathcal{S}_{m}$ of $D_{\mathbf{r}}$ over $S$ is
uniform over all the $k!$ possible orderings among the $k$ elements in $S$. 
A random ordering $D_{\mathbf{r}}$ is said to \emph{$\eps$-almost $k$-wise independent} if the
statistical distance between $D_{\mathbf{r}}$ is at most $\eps$ from
some $k$-wise independent random ordering.
Note that our definitions of $k$-wise independent random ordering
and almost $k$-wise independent random ordering
are different from that of $k$-wise independent permutation 
and almost $k$-wise independent permutation (see e.g.~\cite{KNR09}),
where the latter requires that the function to be a \emph{permutation}
(i.e., the domain and the range of the function are the same set).
In this section we give a construction of $\frac{1}{m^{2}}$-almost $k$-wise 
independent random ordering whose seed length is $O(k\log^{2}m)$.
In our later applications $k=\polylog{m}$ so the seed length of the almost
$k$-wise independent random ordering is also polylogarithmic.


\begin{theorem}\label{thm:orderings}
Let $m\geq 2$ be an integer and let $2\leq k \leq m$.
Then there is a construction of $\frac{1}{m^{2}}$-almost $k$-wise 
independent random ordering over $[m]$ whose seed length is $O(k\log^{2}m)$.
\end{theorem}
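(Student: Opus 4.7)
The plan is to build the ordering by attaching a pseudorandom $\ell$-bit label $r(i) \in \{0,1\}^\ell$ to each element $i\in[m]$, with $\ell = \Theta(\log m)$, and then ordering $[m]$ according to the integer value of $r(i)$, with ties broken by the natural order of the indices to ensure injectivity. The labels will be drawn from an appropriately independent source so that, for any $k$ indices, the joint distribution of their labels is exactly uniform.

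Concretely, I would set $\ell := \lceil 2\log m\rceil + \lceil 2\log k\rceil$, so that $\binom{k}{2}/2^\ell \leq 1/m^2$. The total number of label bits is $L=m\ell$, and I need these $L$ bits to be $k\ell$-wise independent so that for any fixed $S\subseteq[m]$ of size $k$ the $k\ell$ bits defining $(r(i))_{i\in S}$ are jointly uniform. Theorem~\ref{thm:ABI86} supplies exactly such a distribution with seed length
\[
O(k\ell\cdot\log L)=O\bigl(k\log m\cdot\log(m\log m)\bigr)=O(k\log^2 m),
\]
and each individual $r(i)$ can be computed from the seed in the same amount of space.

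For the analysis, I would fix any $S\in\binom{[m]}{k}$ and note that, because the $k\ell$ bits comprising $(r(i))_{i\in S}$ are fully uniform, the labels themselves are i.i.d.\ uniform over $\{0,1\}^\ell$. The probability that two of these $k$ labels coincide is at most $\binom{k}{2}/2^\ell\leq 1/m^2$; on the complementary ``no-collision'' event, the induced ordering of $S$ is exactly uniform over the $k!$ permutations of $S$. This gives the required $\eps=1/m^2$ bound on the statistical distance between the projected marginal on $S$ and the uniform $k$-wise-independent reference. To produce the single reference distribution $D'$ asked for in the definition, I would take $D'$ to be the ordering induced by truly independent uniform $\ell$-bit labels under the same tie-breaking rule, and couple $D_\mathbf{r}$ with $D'$ by matching the joint laws on every $k$-subset of labels; the global total-variation distance then falls out of the same collision calculation.

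The main obstacle is really just the collision bound on each $k$-subset; the seed-length and independence bookkeeping is a direct composition of Theorem~\ref{thm:ABI86} with the standard ``label-and-sort'' idea, so I expect the remainder of the argument to be routine.
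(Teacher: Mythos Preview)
Your approach is essentially the paper's: assign each $i\in[m]$ an $O(\log m)$-bit pseudorandom label built from the $k$-wise independent bits of Theorem~\ref{thm:ABI86}, then sort by label. The paper packages the randomness slightly differently---it takes $s=4\log m$ \emph{independent} copies of $k$-wise independent strings in $\{0,1\}^m$ and reads off the $i$th bit of each to form $r(i)$, rather than a single $(k\ell)$-wise independent string of length $m\ell$---but both arrangements make $(r(i))_{i\in S}$ exactly uniform for every $k$-subset $S$ and give seed length $O(k\log^2 m)$.

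The one genuine gap is in the last step. The paper chooses $s=4\log m$ label bits so that \emph{all} $m$ labels are simultaneously distinct with probability at least $1-\binom{m}{2}m^{-4}\geq 1-1/m^2$; on that global event $r$ is already an injective ordering, and the $1/m^2$ slack absorbs the failure probability. Your $\ell=\lceil 2\log m\rceil+\lceil 2\log k\rceil$ is calibrated only for collisions within a single $k$-subset, and your proposed reference $D'$ (truly i.i.d.\ labels with deterministic index tie-breaking) is \emph{not} $k$-wise independent: whenever two labels collide the lower index always wins, so already $\Pr[i\text{ precedes }j]=\tfrac12+2^{-\ell-1}\neq\tfrac12$. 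The coupling you sketch (``matching the joint laws on every $k$-subset'') does not yield a global total-variation bound either, since equality of all $k$-dimensional marginals says nothing about closeness of the full distributions. The fix is immediate: take $\ell\approx 4\log m$ as the paper does, bound the probability that \emph{some} pair among all $m$ labels collides, and argue via that global distinctness event.
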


\begin{proof}
For simplicity we assume that $m$ is a power of $2$.
Let $s=4\log m$. 
We generate $s$ \emph{independent} copies of
$k$-wise independent random variables $Z_{1}, \ldots, Z_{s}$ 
with each $Z_{\ell}$, $1\leq \ell \leq s$, in $\{0,1\}^{m}$. 
By Theorem~\ref{thm:ABI86}, the seed length of each
random variable $Z_{\ell}$ is $O(k\log m)$ and therefore the total space needed to store these
random seeds is $O(k\log^{2}m)$.
Let these $k$-wise independent $m$-bit random variables be
\begin{align*} 
Z_{1}&=z_{1,1}, \ldots, z_{1,m}; \\
Z_{2}&=z_{2,1}, \ldots, z_{2,m}; \\
&\ldots   \ldots \\ 
Z_{s}&=z_{s,1}, \ldots, z_{s,m}.
\end{align*}

Now for every $1\leq i \leq m$, we view each $r(i)\eqdef z_{1,i}z_{2,i}\cdots z_{s,i}$ 
as an integer in $\{0,1, \ldots, 2^{s}-1\}$ written in the $s$-bit binary representation 
and use $r:[m] \to \{0,1, \ldots, 2^{s}-1\}$ as the ranking function 
to order the $m$ elements in the set.
We next show that, with probability at least $1-1/m^{2}$,  
$r(1), \ldots, r(m)$ are distinct $m$ integers.

Let $1\leq i < j \leq m$
be any two distinct indices. For every $1\leq \ell \leq s$,
since $z_{\ell,1}, \ldots, z_{\ell,m}$ are $k$-wise independent 
and thus also pair-wise independent, it follows that
$\Pr[z_{\ell,i}=z_{\ell,j}]=1/2$. Moreover, as all $Z_{1}, \ldots, Z_{s}$ are independent,
we therefore have
\begin{align*}
  \Pr[r(i)=r(j)]
&=\Pr[\text{$z_{\ell,i}=z_{\ell,j}$ for every $1\leq \ell \leq s$}] \\
&= \prod_{\ell=1}^{s} \Pr[z_{\ell,i}=z_{\ell,j}] \\
&=(1/2)^{s} \\
&=1/m^{4}.
\end{align*}
Applying a union bound argument over all $\binom{m}{2}$ distinct pairs of indices gives 
that with probability at least $1-1/m^{2}$, all these $m$ numbers are distinct. 

Since each $Z_{\ell}$, $1 \leq \ell \leq s$, is a $k$-wise independent random
variable in $\{0,1\}^{m}$, therefore for any subset $\{i_{1}, \ldots, i_{k}\}$
of $k$ indices, $(r(i_{1}), \ldots, r(i_{k}))$
is distributed uniformly over all $2^{ks}$ tuples. 
By symmetry, conditioned on that $r(i_{1}), \ldots, r(i_{k})$ are all distinct,
the restriction of the ordering induced by the ranking function $r$ to 
$\{i_{1}, \ldots, i_{k}\}$ is completely independent. 
Finally, since the probability that $r(1), \ldots, r(m)$ are not distinct
is at most $1/m^{2}$, it follows that the random ordering induced by $r$ is 
$\frac{1}{m^{2}}$-almost $k$-wise independent.
\end{proof}

\section{LCA for \emph{Hypergraph Coloring}}\label{Sec:hypergraph}
We now apply the technical tools developed in Section~\ref{Sec:query_tree} and
Section~\ref{Sec:orderings} to the design and analysis of LCAs.

Recall that a \emph{hypergraph} $H$ is a pair $H = (V,E)$ where $V$ is a finite set whose elements are
called \emph{nodes} or \emph{vertices}, and $E$ is a family of non-empty subsets of $V$, 
called \emph{hyperedges}. 
A hypergraph is called \emph{$k$-uniform} if each of its
hyperedges contains precisely $k$ vertices.
A \emph{two-coloring} of a hypergraph $H$ is a mapping $\mathbf{c}: V\to \{\text{red, blue}\}$
such that no hyperedge in $E$ is monochromatic.
If such a coloring exists, then we say $H$ is \emph{two-colorable}.
In this paper we assume that each
hyperedge in $H$ intersects at most $d$ other hyperedges.
Let $N$ be the number of hyperedges in $H$. 
Here and after we think of $k$ and $d$ as fixed constants
and all asymptotic forms are with respect to $N$.
By the Lov{\'{a}}sz Local Lemma (see, e.g.~\cite{AS00}) 
when $e(d+1) \leq 2^{k-1}$, the hypergraph $H$ is two-colorable.

Following~\cite{RTVX11b}, we let $m$ be the total number of vertices in $H$. 
Note that $m\leq kN$, so $m=O(N)$.
For any vertex $x\in V$, we use $\mathcal{E}(x)$ to denote the set of hyperedges $x$ belongs to.
For any hypergraph $H = (V,E)$,
we define a \emph{vertex-hyperedge incidence matrix} $\mathcal{M}\in \{0,1\}^{m\times N}$
so that, for every vertex $x$ and every hyperedge $e$, 
$\mathcal{M}_{x,e}=1$ if and only if $e\in \mathcal{E}(x)$.
Because we assume both $k$ and $d$ are constants, 
the incidence matrix $\mathcal{M}$ is necessarily very sparse. 
Therefore, we further assume that the matrix $\mathcal{M}$ is implemented via
linked lists for each row (that is, vertex $x$) and each column (that is, hyperedge $e$). 

Let $G$ be the \emph{dependency graph} of the hyperedges in $H$. 
That is, the vertices of the undirected graph $G$
are the $N$ hyperedges of $H$ and a hyperedge $E_{i}$ is connected to
another hyperedge $E_{j}$ in $G$ if $E_{i}\cap E_{j} \neq \emptyset$.
It is easy to see that if the input hypergraph is given in the 
above described representation, then we can find all the neighbors of any hyperedge $E_{i}$ 
in the dependency graph $G$ (there are at most $d$ of them) in $O(\log N)$ time.

\subsection{Overview of Alon's algorithm}
We now give a sketch of Alon's algorithm~\cite{Alo91}; 
for a detailed description of the algorithm in the context of LCA see~\cite{RTVX11b}.

The algorithm runs in three phases.
In the first phase, we go over all the vertices in the hypergraph in \emph{any} order 
and color them in $\{\mathrm{red}, \mathrm{blue}\}$ uniformly at random.
During this process, if any hyperedge has too many vertices (above some threshold) 
in it are colored in one color and no vertex is colored in the other color, 
then this hyperedge is said to become \emph{dangerous}.
All the uncolored vertices in the dangerous hyperedges are then frozen
and will be skipped during Phase 1 coloring.
A hyperedge is called \emph{survived} if it does not have vertices in both colors
at the end of Phase 1. 
The basic lemma, based on the breakthrough result of Beck~\cite{Bec91},
claims that after Phase 1, almost surely all connected components of the dependency graph $H$
of survived hyperedges are of sizes at most $O(\log{N})$.
We then proceed to the second phase of the algorithm which repeats 
the same coloring process (with some different threshold parameter)
for each connected component and gives rise to connected components of size $O(\log\log{N})$.
Finally in the third phase we perform a brute-force search for a valid coloring whose existence 
is guaranteed by the Lov{\'{a}}sz local lemma.
As each of the connected components in Phase 3 has at most $O(\log\log{N})$ vertices,
the running time of each brute force search is thus bounded by $\polylog{N}$.

To turn Alon's algorithm into an LCA, Rubinfeld et al.~\cite{RTVX11b} note that
one may take the order that vertices are queried as the order to color the vertices
and then in Phase 2 and Phase 3 focus only on the connected components in which the queried vertex lie.
This leads to an LCA with polylogarithmic running time but the space complexity 
can be linear in the worst case (as the algorithm needs to remember the colors
of all previously queried or colored vertices).
In addition, the LCA is not query oblivious and not easily parallelizable.

\subsection{New LCA for \emph{Hypergraph Coloring}}
To remedy these, we add several new ingredients to the LCA in~\cite{RTVX11b} 
and achieve an LCA with both time and space complexity are polylogarithmic.
In addition, the LCA is query oblivious and easily parallelizable.
\vspace{2mm}

\paragraph{1st ingredient: bounded-degree dependency.}
We first make use of the following simple fact:
the color of any fixed vertex in the hypergraph depends only on the colors of a very small number of vertices.
Specifically, if vertex $x$ lies in hyperedges $E_{1}, \ldots, E_{d'}$,
then the color of $x$ depends only on the colors of all the vertices in $E_{1}, \ldots, E_{d'}$.
As every hyperedge is $k$-uniform and each hyperedge
intersects at most $d$ other hyperedges,
the color of any vertex depends on at most the colors of $D=k(d+1)$ other vertices.
\vspace{2mm}

\paragraph{2nd ingredient: random permutation.}
Note that in the first phase of Alon's coloring algorithm, 
\emph{any} order of the vertices will work. 
Therefore, we may apply the idea of random ordering in~\cite{NO08}.
Specifically, suppose we are given a random number generator $r:[m] \to [0,1]$
which assign a random number uniformly and independently to every vertex in the hypergraph.
Suppose the queried vertex is $x$. 
Then we build a (random) query tree $\mathcal{T}$ rooted at $x$ using BFS as follows:
there are at most $D$ other vertices such that the color of $x$ depends on 
the colors of these vertices.
Let $y$ be any of such vertex. If $r(y)<r(x)$, i.e. the random number assigned
to $y$ is smaller than that of $x$, then we add $y$ as a child node of $x$ in $\mathcal{T}$.
We build the query tree this way recursively until there is 
no child node can be added to $\mathcal{T}$. 
By Theorem~\ref{thm:query_tree}, with probability at least $1-1/m^{2}$,
the total number of nodes in $\mathcal{T}$ is at most $\polylog{m}$
and is thus also at most $\polylog{N}$.
This implies that, if we color the vertices in $\mathcal{T}$ in the order from bottom to top (that is, 
we color the leaf nodes first, then the parent nodes of the leaf nodes and so on,
 and color the root node $x$ last), then for any vertex $x$, with probability at least $1-1/m^{2}$
we can follow Alon's algorithm and color at most $\polylog{N}$ vertices (and ignore all other vertices 
in the hypergraph) before coloring $x$.
Therefore the running time of the first phase of our new LCA is (almost surely) at most $\polylog{N}$.
 
\vspace{2mm}

\paragraph{3rd ingredient: $k$-wise independent random ordering.} 
The random permutation method requires linear space to store all the random numbers that have 
been revealed in previous queries in order to make the answers consistent.
However, two useful observations enable us to reduce the space complexity of random ordering
from linear to polylogarithmic.
First, only the relative orderings among vertices matter: in building the query tree $\mathcal{T}$
we only check if $r(y)<r(x)$ but the absolute value of $r(x)$ and $r(y)$ are irrelevant.
Therefore we can replace the random number generator $r$ with an equivalent random ordering function $r\in \mathcal{S}_{m}$,
where $\mathcal{S}_{m}$ is the symmetric group on $m$ elements.
Second, as the query tree size is at most polylogarithmic almost surely, 
the random ordering function $r$ need not be totally random but a polylogarithmic-wise independent permutation suffices\footnote{
Since the full query tree has degree bound $D$, so the total number of nodes queried in building 
the random query tree $\mathcal{T}$ is at most $D|\mathcal{T}|$, which is also at most
polylogarithmic.}.
Therefore we can use the construction in Theorem~\ref{thm:orderings}
of $\frac{1}{m^{2}}$-almost $k$-wise independent random ordering of all the vertices in the 
hypergraph with $k=\polylog{N}$. The space complexity of such a random ordering, or the seed length, 
is $O(k\log^{2}m)=\polylog{N}$.
\vspace{2mm}

\paragraph{4th ingredient: $k$-wise independent random coloring.} 
Finally, the space complexity for storing all the random colors assigned to vertices is also linear in worst case.
Once again we exploit the fact that all computations in LCAs are local to reduce the space complexity.
Specifically, the proof of the basic lemma of Alon's algorithm (see e.g.~\cite[Claim 5.7.2]{AS00}) 
is valid as long as the random coloring of the vertices is $c\log{N}$-wise independent, 
where $c$ is some absolute constant.
Therefore we can replace the truly random numbers in $\{0,1\}^{m}$ used for coloring 
with a $c\log{N}$-wise independent random numbers in $\{0,1\}^{m}$ constructed in Theorem~\ref{thm:ABI86}
thus reducing the space complexity of storing random colors to $O(\log^{2}{N})$.
\vspace{2mm}

\subsection{Pseudocode of the LCA and main result}
To put everything together, we have the following LCA for \emph{Hypergraph Coloring} as illustrated in 
Fig.~\ref{Fig:coloring}, Fig.~\ref{Fig:Phase2} and Fig.~\ref{Fig:Phase3}.
In the preprocessing stage, the algorithm generates $O(\frac{\log{N}}{\log\log{N}})$ copies 
of pseudo-random colors for every vertex in the hypergraph and 
a pseudorandom ordering of all the vertices.
To answer each query, the LCA runs in three phases.
Suppose the color of vertex $x$ is queried.
During the first phase, the algorithm uses BFS to build a random query tree rooted at $x$ and 
then follows Alon's algorithm to color all the vertices in the query tree.
If $x$ gets colored in Phase $1$, the algorithm simply returns that color;
if $x$ is frozen in Phase $1$, then Phase $2$ coloring is invoked.
In the second phase, the algorithm first explores the connected components around $x$ of survived hyperedges.
Then Alon's algorithm is performed again, but this time only on the vertices in the connected component.
For some technical reason, the random coloring process is repeated  
$O(\frac{\log{N}}{\log\log{N}})$ times\footnote{This
is why the algorithm generates many copies of independent pseudorandom colorings at the beginning of the LCA.},
until a  good coloring is found which makes 
all the surviving connected components after Phase $2$ very small.
If $x$ gets colored in the good coloring, then that color is returned; otherwise
the algorithm runs the last phase, 
in which a brute-force search is performed to find the color of $x$.

The time and space complexity as well as the error bound of the LCA are
easy to analyze and we have the following main result of LCA for \emph{Hypergraph Coloring}:
\begin{theorem}\label{thm:hypergraph_main}
Let $d$ and $k$ be such that there exist three positive integers $k_{1}, k_{2}$ and $k_{3}$ such that
the followings hold:
\begin{align*}
k_{1}+k_{2}+{k_3} &=k, \\
16d(d-1)^{3}(d+1) &< 2^{k_{1}},\\
16d(d-1)^{3}(d+1) &< 2^{k_{2}},\\
2e(d+1) &< 2^{k_{3}}.
\end{align*}
Then there exists a  
$(\polylog{N}, \polylog{N}, 1/N)$-local computation algorithm 
which, given a hypergraph $H$ and any sequence of
queries to the colors of vertices $(x_1, x_2, \ldots, x_s)$, 
returns a consistent coloring for all $x_i$'s which 
agrees with some $2$-coloring of $H$.  
\end{theorem}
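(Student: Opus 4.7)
The plan is to assemble the four ingredients already described (bounded-degree dependency, random ordering, almost $k$-wise independent orderings from Theorem~\ref{thm:orderings}, and $c\log N$-wise independent colorings from Theorem~\ref{thm:ABI86}) into a three-phase LCA mirroring Alon's algorithm, and then verify the four claimed properties (time, space, error, obliviousness/parallelizability) for a query about a vertex $x$.

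\textbf{Preprocessing.} In the preprocessing stage, I would pick $k = c_{1}\log^{D+1}N$ with $D = k(d+1)$, construct one $\frac{1}{m^{2}}$-almost $k$-wise independent random ordering $r$ of $[m]$ via Theorem~\ref{thm:orderings} (seed length $O(k\log^{2} m) = \polylog{N}$), and also construct $s = \Theta(\log N/\log\log N)$ independent copies of $c\log{N}$-wise independent random $m$-bit strings $\chi_{1},\ldots,\chi_{s}$ via Theorem~\ref{thm:ABI86} (total seed length $s\cdot O(\log^{2} N)=\polylog{N}$). These seeds alone determine every answer, which gives query-obliviousness and (since the computation for each query is a deterministic function of $x$ and these seeds) parallelizability.

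\textbf{Phase 1.} To answer a query about $x$ I would grow the random query tree $\mathcal{T}(x)$ of Section~\ref{Sec:query_tree} using BFS: a vertex $y$ whose color can influence the color of a node $v$ already in $\mathcal{T}(x)$ is added iff $r(y) < r(v)$. By Theorem~\ref{thm:query_tree} applied with $N$ replaced by $m = O(N)$, except with probability $\leq 1/m^{2}$ the tree has $\leq C(D)\log^{D+1} m$ vertices, and exploring each edge of the dependency graph costs $O(\log N)$ time by the linked-list representation of $\mathcal{M}$; so the entire tree is built in $\polylog{N}$ time and $\polylog{N}$ space. I would then run Phase~1 of Alon's coloring on $\mathcal{T}(x)$ from the leaves upward, assigning each vertex $v$ the color dictated by $\chi_{1}(v)$ unless $v$ has been frozen by some dangerous hyperedge already discovered during the bottom-up pass. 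If $x$ is colored in Phase~1, output that color.

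\textbf{Phases 2 and 3.} If $x$ is frozen, I would perform a BFS on the dependency graph starting at the hyperedges containing $x$, keeping only survived hyperedges, to recover the connected component $K(x)$ of survived hyperedges around $x$. The basic lemma (Beck \cite{Bec91}, Alon \cite{Alo91}) only needs the Phase~1 coloring to be $c\log{N}$-wise independent for some absolute constant $c$, so it still applies with $\chi_{1}$; consequently $|K(x)| = O(\log N)$ except with probability $\leq 1/N^{3}$. For each of the $s$ coloring seeds $\chi_{j}$, I would re-run Alon's Phase~2 coloring restricted to $K(x)$ and test whether that trial is \emph{good}, meaning every Phase-$3$ component it produces has size $O(\log\log N)$; again the basic lemma gives success probability $\Omega(1)$ per trial, so with $s = \Theta(\log N/\log\log N)$ trials the chance of no good seed is $\leq 1/N^{3}$. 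On the first good seed, if $x$ is colored I return that color, otherwise I brute-force search all $2^{O(\log\log N)} = \polylog{N}$ colorings of the surviving $O(\log\log N)$-sized Phase-$3$ component containing $x$; the Lov\'asz Local Lemma (using the hypothesis $2e(d+1) < 2^{k_{3}}$) guarantees a valid extension.

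\textbf{Putting it together.} The three parameters $k_{1},k_{2},k_{3}$ are exactly the thresholds required for the basic lemma to succeed in Phases~1 and~2 and for the LLL to succeed in Phase~3, as in \cite{Alo91,RTVX11b}; this is why the hypothesis on $d$ and $k$ appears in the statement. Combining the failure probabilities (query tree too large, Phase-$1$ component too large, no good Phase-$2$ seed, ordering collision) via a union bound gives total error $O(1/N^{2})$ per query, hence $\leq 1/N$ over any $\polylog{N}$ sequence of queries; time and space are $\polylog{N}$ by the counts above. The main obstacle I anticipate is checking that the limited-independence replacements genuinely preserve the Beck--Alon analysis: I would verify that (i) the query tree bound of Theorem~\ref{thm:query_tree} carries over when the truly uniform ranks are replaced by the output of Theorem~\ref{thm:orderings}, since only the relative order of the $\leq \polylog{N}$ visited vertices matters and $k$ was chosen to exceed this, absorbing the $1/m^{2}$ statistical error into the union bound; and (ii) Beck's ``all bad events in a connected block occur'' estimate is a union over $\polylog{N}$ events each depending on $O(k) = O(1)$ coordinates of $\chi_{j}$, so $c\log{N}$-wise independence of $\chi_{j}$ suffices, matching the proof in \cite[Claim 5.7.2]{AS00}.
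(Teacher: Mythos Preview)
Your proposal is correct and follows essentially the same route as the paper: the paper's own argument is precisely the assembly of the four ingredients (bounded-degree dependency $D=k(d+1)$, the random query-tree bound of Theorem~\ref{thm:query_tree}, the almost $k$-wise independent ordering of Theorem~\ref{thm:orderings}, and the $c\log N$-wise independent colorings of Theorem~\ref{thm:ABI86}) into the three-phase scheme displayed in Figures~\ref{Fig:coloring}--\ref{Fig:Phase3}, after which the paper simply declares the time, space and error bounds ``easy to analyze.'' You have in fact supplied more detail than the paper does, including the two verifications (query-tree bound under limited-independence orderings, Beck--Alon component bound under $c\log N$-wise independent colorings) that the paper only asserts. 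Two cosmetic points: you overload the symbol $k$ (hypergraph uniformity versus the independence parameter of the ordering), and the final union bound should be taken over all $m=O(N)$ possible queries rather than ``any $\polylog N$ sequence,'' but since the per-query failure probability is $O(1/N^{2})$ this still yields the claimed $1/N$.
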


\begin{figure*}
\begin{center}
\fbox{
\begin{minipage}{5in}
\small
\begin{tabbing}
\textbf{LCA for \emph{Hypergraph Coloring}} \\
Preprocessing: \\
\quad 1. generate $O(\frac{\log{N}}{\log\log{N}})$ copies of $c\log{N}$-wise independent random variables in $\{0,1\}^{m}$ \\
\quad 2. generate a $\frac{1}{m^{2}}$-almost $\polylog{N}$-wise independent random ordering over $[m]$ \\
Input: a vertex $x \in V$\\
Out\=put\=: a color in \{\emph{red}, \emph{blue}\}\\
1.\> Use BFS to grow a random query tree $\mathcal{T}$ rooted at $x$ \\
2.\> Color the vertices in $\mathcal{T}$ bottom up \\
3.\> If $x$ is colored \emph{red} or \emph{blue}, return the color \\
  \>\> Else run \textbf{Phase $2$ Coloring}($x$)
\end{tabbing}
\end{minipage}
}
\end{center}
\caption{Local computation algorithm for \emph{Hypergraph Coloring}}
\label{Fig:coloring}
\end{figure*}

\begin{figure*}
\begin{center}
\fbox{
\begin{minipage}{5in}
\small
\begin{tabbing}
\textbf{Phase $2$ Coloring($x$)} \\
Input: a vertex $x \in V$ \\
Out\=put\=: a \=color in \{\emph{red}, \emph{blue}\} or \emph{FAIL}\\
1.\> Start from $\mathcal{E}(x)$ to explore $G$ in order to find the connected \\
  \>\>component $C_{1}(x)$ of \emph{survived} hyperedges around $x$\\
2.\> If the size of the component is larger than $c_{2}\log N$ \\
  \>\> Abort and return \emph{FAIL}\\
3.\> Repeat the following $O(\frac{\log{N}}{\log\log{N}})$ times and stop if
  a \emph{good} coloring is found\footnote{
  Following~\cite{RTVX11b}, let $S_{1}(x)$ be the set of surviving hyperedges in $C_{1}(x)$
after all vertices in $C_{1}(x)$ are either colored or are frozen.
Now we explore the dependency graph of $S_{1}(x)$ to find out all
the connected components. We say a Phase $2$ coloring is \emph{good} if all connected components 
  in $G|_{S_1}(x)$ have sizes at most $c_{3}\log\log{N}$, where $c_{3}$ is some absolute constant. }\\
   \>\> (a) Color all the vertices in $C_{1}(x)$ uniformly at random\\
   \>\> (b) Explore the dependency graph of $G|_{S_{1}(x)}$ \\
   \>\> (c) Check if the coloring is \emph{good} \\
4.\> If $x$ is colored in the good coloring, return that color \\
  \>\> Else run \textbf{Phase $3$ Coloring}($x$)
\end{tabbing}
\end{minipage}
}
\end{center}
\caption{Local computation algorithm for \emph{Hypergraph Coloring}:  Phase $2$}
\label{Fig:Phase2}
\end{figure*}

\begin{figure*}
\begin{center}
\fbox{
\begin{minipage}{8in}
\small
\begin{tabbing}
\textbf{Phase $3$ Coloring($x$)} \\
Input: a vertex $x \in V$ \\
Out\=put\=: a \=color in \{\emph{red}, \emph{blue}\}\\
1.\> Start from $\mathcal{E}(x)$ to explore $G$ in order to find the connected \\
  \>\> component of all the \emph{survived} hyperedges around $x$\\
2.\> Go over all possible colorings of the connected component\\
  \>\> and color it using a feasible coloring. \\
3.\> Return the color $c$ of $x$ in this coloring.
\end{tabbing}
\end{minipage}
}
\end{center}
\caption{Local computation algorithm for \emph{Hypergraph Coloring}: Phase $3$}
\label{Fig:Phase3}
\end{figure*}

\section{LCA for \emph{Maximal Independent Set}}\label{Sec:mis}
Recall that an independent set (IS) of a graph $G$ is a subset of vertices such that no two vertices in the set
are adjacent. An independent set is called a \emph{maximal independent set}
(MIS) if it is not properly contained in any other IS. 

In~\cite{RTVX11, RTVX11b}, a two-phase LCA is presented for MIS.
For completeness, we present the pseudocode of the LCA in Appendix~\ref{Sec:code_mis}.
Let $G$ be a graph with maximum degree $d$ and suppose the queried vertex is $v$. 
In the first phase, the LCA simulates Luby's algorithm for MIS~\cite{Lub86}.
However, instead of running the parallel algorithm for $O(\log n)$ rounds 
as the original Luby's algorithm, the LCA simulates the parallel algorithm for only $O(d\log{d})$ rounds.
Following an argument of Parnas and Ron~\cite{PR07}, 
the sequential running time for simulating the parallel algorithm to determine
whether a given node is in the MIS is $d^{O(\log{d})}$.
If $v$ or any of $v$'s neighbors is put into the independent set during the first phase,
then the algorithm return ``Yes'' or ``No'', respectively.
If, on the other hand, $v$ lies in some connected component of ``surviving'' vertices
after running the first phase,
then the algorithm proceeds to the second phase algorithm, in which a simple linear-time greedy search
for an MIS of the component is performed. 
A key result proved in~\cite{RTVX11, RTVX11b} is that, after the first phase of the algorithm,
almost surely all connected components have sizes at most $O(\poly{d}\log{n})$.
Therefore the running time\footnote{Note that 
we need to run a BFS starting from $v$ to explore the connected component in which $v$ lies.
Each step of the BFS incurs a run on the explored node of the first phase LCA.} 
of the second phase is $d^{O(\log{d})}\log{n}$.

To implement such a two-phase LCA and ensure that all answers are consistent, 
we need to maintain a random tape that keeps a record of all the generated random bits
during previous runs, which implies the space complexity of the LCA is linear
in the worst case.
To see this, suppose two vertices $u$ and $v$ are connected in $G$
and $u$ is queried first.
Suppose further that the LCA runs on $u$ and finds out during the first phase that $u$ is in the IS.
If vertex $v$ is queried at some time later, we need to ensure that,
when simulating Luby's algorithm $u$ is put in the IS in some round 
(hence $v$ is deleted in the round after that).
This in turn requires that we retrieve the random bits used during the run of LCA on $u$.

A simple but crucial observation which enables us to reduce the space complexity of the LCA
for MIS is, since all the computations are ``local'',
we may replace the truly random bits used in the algorithm
with random bits of limited independence constructed in Theorem~\ref{thm:ABI86}. 

First we round the degree bound of $G$ to $\tilde{d}=2^{\lceil \log{d} \rceil}$.
Note that $d\leq \tilde{d} <2d$.
Now we can generate the probability $1/2\tilde{d}$ used
in Luby's algorithm (c.f. Figure~\ref{maxis_P1}) 
by tossing $\log{\tilde{d}}=\lceil \log{d} \rceil$ independent fair coins.

Since the second phase of the LCA is deterministic, we can therefore focus on the first phase only.
The running time of the first phase is shown to be $d^{O(\log{d})}$~\cite{RTVX11b}.
Following the notation in~\cite{RTVX11b}, for a vertex $v$ in $G$, 
let $A_{v}$ be the event that $v$ is a surviving vertex at the end of Phase 1
and let $B_{v}$ be the event that $v$ is in state ``$\perp$'' after running 
$\mathbf{MIS}_{B}$ for $O(d\log d)$ rounds, where $\mathbf{MIS}_{B}$ is a 
variant of $\mathbf{MIS}$, a subroutine of the first phase algorithm.
It was shown in~\cite{RTVX11b} that $A_{v}\subseteq B_{v}$ (Claim 4.2)
and for any subset of vertices $W$,
\begin{align*}
&\quad\Pr[\text{all vertices in $W$ are surviving vertices}]\\
&=\Pr[\cap_{v\in W} A_{v}]  \\
&\leq\Pr[\cap_{v\in W} B_{v}].
\end{align*}

Following the proof of Lemma 4.6 in~\cite{RTVX11b},
a graph $H$ on the vertices $V(G)$ 
is called a \emph{dependency graph} for $\{B_{v}\}_{v\in V(G)}$ if for all $v$ the
event $B_{v}$ is mutually independent of all $B_{u}$ such that $(u,v)\notin H$.
Let $H^3$ denote the ``distance-$3$'' graph of $H$, that is,
vertices $u$ and $v$ are connected in $H^3$ if their distance in $H$ is exactly $3$.
Let $W$ be a subset of vertices in $H^3$.
Then, since all vertices in $W$ are at least $3$-apart,
all the events $\{B_{v}\}_{v\in W}$ are mutually independent, it follows that
the probability that all vertices in $W$ are surviving vertices satisfies
\[
\Pr[\cap_{v\in W} B_{v}]=\prod_{v\in W}\Pr[B_{v}].
\]
Finally in the proof of Lemma 4.6 in~\cite{RTVX11b}, the size of $W$ is taken to be
$c_{1}\log n$ for some constant $c_{1}$ to show that, 
almost surely all connected components of surviving vertices
after Phase 1 are of sizes at most $\poly{d}\log n$.

Now we try to replace the true random bits used in the LCA in~\cite{RTVX11b}
with pseudorandom bits of limited independence.
Firstly, since the running time of the first phase is $d^{O(\log{d})}$,
hence this is also the running time of the algorithm 
if the subroutine $\mathbf{MIS}$ is replaced with $\mathbf{MIS}_{B}$.
It follows that each event $B_{v}$ depends on at most
$d^{O(\log{d})}\cdot \log{\tilde{d}}=d^{O(\log{d})}$ random bits.
Secondly, the argument we sketched in the last paragraph is still valid
as long as the events $\{B_{v}\}_{v\in H^{3}}$ are $c_{1}\log{n}$-wise independent.
Such a condition is satisfied if the random bits used in the algorithm are $k$-wise independent, 
where $k=d^{O(\log{d})}\cdot c_{1}\log{n}=d^{O(\log{d})}\log{n}$.
Note that the total number of random bits used during the first phase for all vertices is
$m=d^{O(\log{d})}\cdot n$.
Therefore all we need is a $k$-wise independent random variable in $\{0,1\}^{m}$.
By Theorem~\ref{thm:ABI86},
such random variables can be constructed with seed length 
$O(k\log{m})=d^{O(d\log d)}\log^{2}{n}$
and each random bit can be computed in time $O(k\log{m})=d^{O(d\log d)}\log^{2}{n}$.

To put everything together, we proved the following theorem regarding 
the LCA for MIS\footnote{Note that the space complexity of storing the pseudorandom bits
dominates the space complexity of local computation for each query.}:

\begin{theorem}\label{thm:MIS_main}
Let $G$ be an undirected graph with $n$ vertices and maximum degree $d$. 
Then there is a 
$d^{O(d\log d)}\log^{3}{n}, d^{O(\log{d})}\log^{2}{n}, 1/n)$-local computation algorithm 
which, on input a vertex $v$, 
decides if $v$ is in a maximal independent set.
Moreover, the algorithm will give a consistent MIS 
for every vertex in $G$.
\end{theorem}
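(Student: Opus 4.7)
The plan is to take the two-phase LCA of Rubinfeld et al.\ verbatim as the template and argue that every use of true randomness can be replaced by a suitable $k$-wise independent distribution drawn from the Alon--Babai--Itai construction of Theorem~\ref{thm:ABI86}, with the budget for $k$ dictated by the original concentration argument. Concretely, I would first round $d$ up to the nearest power of two $\tilde d$ so that each coin toss in Luby's step (using probability $1/(2\tilde d)$) costs exactly $\lceil \log d\rceil$ uniform bits. The second phase is deterministic given the output of the first phase, so the entire randomness analysis reduces to Phase~1.

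Next I would track how many random bits Phase~1 actually looks at while deciding, for a single vertex $v$, whether it survives. Since the LCA simulates $O(d\log d)$ rounds of a BFS of radius $O(d\log d)$ in a degree-$d$ graph, the surviving event $B_v$ depends on at most $N_v = d^{O(\log d)}\cdot \lceil \log d\rceil = d^{O(\log d)}$ random bits. This is the first quantitative input I need. Let $m = n\cdot d^{O(\log d)}$ be the total number of random bits consumed if every vertex were queried; the pseudorandom bits will live in $\{0,1\}^m$.

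The main obstacle, and the crucial step, is to verify that the concentration proof of \cite{RTVX11b} (their Lemma~4.6) only uses limited independence. Their argument picks a set $W$ of $c_1\log n$ surviving vertices that are pairwise at distance at least $3$ in the dependency graph $H$, and factors $\Pr[\bigcap_{v\in W} B_v]=\prod_{v\in W}\Pr[B_v]$ using mutual independence among events that look at disjoint random coordinates. This factorization survives verbatim as long as the coordinates feeding the events $\{B_v\}_{v\in W}$ are $k$-wise independent for $k \geq |W|\cdot N_v = c_1\log n \cdot d^{O(\log d)} = d^{O(\log d)}\log n$. I would therefore invoke Theorem~\ref{thm:ABI86} with this value of $k$ and with $m$ as above, which yields a symmetric distribution of seed length $O(k\log m) = d^{O(\log d)}\log^2 n$, with each bit individually computable in the same space. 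Plugging this pseudorandom source in place of the random tape in the Phase~1 LCA leaves every probability bound in \cite{RTVX11b} unchanged, so again all connected components of surviving vertices have size at most $\poly(d)\log n$ except with probability at most $1/n$.

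Finally I would add up the resources. Phase~1 on a single query costs time $d^{O(\log d)}$ (as in \cite{RTVX11b}), each random-bit query inside it now costs $O(k\log m) = d^{O(\log d)}\log^2 n$, giving time $d^{O(\log d)}\cdot d^{O(\log d)}\log^2 n$ per vertex. Phase~2 does a BFS of size $\poly(d)\log n$, each of whose steps triggers a Phase~1 call, for total time $d^{O(d\log d)}\log^3 n$. The workspace at any moment holds only the $O(k\log m) = d^{O(\log d)}\log^2 n$-bit seed, a constant-size local state, and the current BFS stack of size $\poly(d)\log n$, all within $d^{O(\log d)}\log^2 n$ bits. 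Consistency across queries is automatic because the seed is fixed once for the whole algorithm, and correctness follows because any two queries that both succeed under Phase~1 see the same deterministic Luby-style execution on their respective neighborhoods. This yields the claimed $(d^{O(d\log d)}\log^3 n,\, d^{O(\log d)}\log^2 n,\, 1/n)$ parameters and produces a globally consistent MIS.
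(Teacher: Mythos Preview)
Your proposal is correct and follows essentially the same approach as the paper: round $d$ to a power of two, observe that Phase~2 is deterministic, bound the number of random bits each event $B_v$ depends on by $d^{O(\log d)}$, note that the Lemma~4.6 argument in \cite{RTVX11b} only needs the events $\{B_v\}_{v\in W}$ for $|W|=c_1\log n$ to be mutually independent, set $k=d^{O(\log d)}\log n$ accordingly, and plug in Theorem~\ref{thm:ABI86} to get seed length $O(k\log m)=d^{O(\log d)}\log^2 n$. The resource accounting you give matches the paper's.
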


\section*{Acknowledgments}
We would like to thank Tali Kaufman and Krzysztof Onak for enlightening discussions.

\appendix

\section{Pseudocode of the LCA for \emph{Maximal Independent Set}}\label{Sec:code_mis}
In this section we present the pseudocode of the LCA for \emph{Maximal Independent Set}.
This is taken from~\cite{RTVX11b} with slight modifications and 
we also refer interested readers to~\cite{RTVX11b} for detailed description and analysis of the algorithm.

\begin{figure*}
\begin{center}
\fbox{
\begin{minipage}{5in}
\small
\begin{tabbing}
\textsc{Maximal Independent Set: Phase $1$} \\
Input: a graph $G$ and a vertex $v \in V$\\
Out\=put\=:  \{``\=tru\=e'', \=``fa\=lse'', ``$\perp$''\}\\
   \> For $i$ from $1$ to $r=20d\log d$ \\
   \>\>(a) If $\mathbf{MIS}(v, i)=\text{``selected''}$ \\
     \>\>\> return ``true''\\
   \>\>(b) Else if $\mathbf{MIS}(v, i)=\text{``deleted''}$ \\
	  \>\>\> return ``false'' \\
   \>\>(c) Else \\
	\>\>\> return ``$\perp$''\\
\\

{$\mathbf{MIS}(v, i)$}\\
Input: a vertex $v \in V$ and a round number $i$\\
Out\=put\=:  \{``\=sel\=ect\=ed'', ``deleted'', ``$\perp$''\}\\
1. \> If $v$ is marked ``selected'' or ``deleted'' \\
   \>\> return ``selected'' or ``deleted'', respectively\\
2. \> For every $u$ in $N(v)$\\
   \>\> If $\mathbf{MIS}(u, i-1)=\text{``selected''}$ \\
   \>\>\> mark $v$ as `` deleted'' and return ``deleted''\\
3. \> $v$ chooses itself independently with probability $\frac{1}{2d}$\\
   \>\> If $v$ chooses itself\\
   \>\>\> (i) For every $u$ in $N(v)$\\
   \>\>\>\> If $u$ is marked ``$\perp$'', 
       $u$ chooses itself independently with probability $\frac{1}{2d}$ \\
   \>\>\> (ii) If $v$ has a chosen neighbor \\
   \>\>\>\> return ``$\perp$''\\
   \>\>\> (iii) Else\\
   \>\>\>\> mark $v$ as ``selected'' and return ``selected''\\
   \>\> Else \\
   \>\>\> return ``$\perp$''
  
\end{tabbing}
\end{minipage}
}
\end{center}
\caption{Local computation algorithm for MIS:  Phase $1$}
\label{maxis_P1}
\end{figure*}

\begin{figure*}
\begin{center}
\fbox{
\begin{minipage}{5in}
\small
\begin{tabbing}
{$\mathbf{MIS}_{B}(v, i)$}\\
Input: a vertex $v \in V$ and a round number $i$\\
Out\=put\=:  \{``\=pick\=ed'', ``$\perp$''\}\\
1. \> If $v$ is marked ``picked'' \\
   \>\> return ``picked''\\
2. \> $v$ chooses itself independently with probability $\frac{1}{2d}$\\
   \>\> If $v$ chooses itself\\
   \>\>\> (i) For every $u$ in $N(v)$\\
   \>\>\>\>  $u$ chooses itself independently with probability $\frac{1}{2d}$ \\
   \>\>\> (ii) If $v$ has a chosen neighbor \\
   \>\>\>\> return ``$\perp$''\\
   \>\>\> (iii) Else\\
   \>\>\>\> mark $v$ as ``picked'' and return ``picked''\\
   \>\> Else \\
   \>\>\> return ``$\perp$''
  
\end{tabbing}
\end{minipage}
}
\end{center}
\caption{Algorithm $\mathbf{MIS}_{B}$}
\label{maxis_B}
\end{figure*}

\begin{figure*}
\begin{center}
\fbox{
\begin{minipage}{5in}
\small
\begin{tabbing}
\textsc{Maximal Independent Set: Phase $2$} \\
Input: a graph $G$ and a vertex $v \in V$\\
Out\=put\=:  \{``\=tru\=e'', \=``fa\=lse''\}\\
1. \> Run BFS starting from $v$ to grow a connected component of surviving vertices \\
   \> (If a vertex $u$ is in the BFS tree and $w\in N(u)$ in $G$, then $w$ is in the BFS tree \\
     \>\> if and only if running the first phase LCA on $w$ returns ``$\perp$'')\\
2. \> (Run the greedy search algorithm on the connected component for an MIS)\\
   \> Set $S=\emptyset$ \\
   \> Scan all the vertices in the connected component in order \\
   \>\> If a vertex $u$ is not deleted \\
   \>\>\> add $u$ to $S$ \\
   \>\>\> delete all the neighbors of $u$\\
3. \> If $v\in S$ \\
   \>\> return ``true''\\
   \>\> else ``false''\\   
\end{tabbing}
\end{minipage}
}
\end{center}
\caption{Local computation algorithm for MIS:  Phase $2$}
\label{maxis_P2}
\end{figure*}

\end{document}